\documentclass[lettersize,journal]{IEEEtran}
\usepackage{amsmath,amsfonts}
\usepackage{algorithmic}
\usepackage{algorithm}
\usepackage{array}
\usepackage[caption=false,font=normalsize,labelfont=sf,textfont=sf]{subfig}
\usepackage{textcomp}
\usepackage{stfloats}
\usepackage{url}
\usepackage{verbatim}
\usepackage{graphicx}
\usepackage{cite}
\usepackage{amssymb}
\usepackage[utf8]{inputenc} 
\usepackage[T1]{fontenc}
\usepackage{ifthen}
\usepackage{amsmath,amsthm}
\usepackage{tikz}
\usepackage{tikz-3dplot}
\usetikzlibrary{backgrounds,shapes.geometric}
\usepackage{pgfplots}
\pgfplotsset{compat=1.18}
\usetikzlibrary{calc}
\usepackage[justification=centering]{caption}
\usepackage{multirow}
\usepackage{bm}

\newtheorem{thm}{Theorem}[section]

\newtheorem{lemma}[thm]{Lemma}

\newtheorem{example}[thm]{Example}
\newtheorem*{example*}{Example}

\newtheorem*{remark*}{Remark}

\usepackage{bm}
 
\newcommand{\bldg}{{\bm{g}}}
\newcommand{\bldu}{{\bm{u}}}
\newcommand{\bldv}{{\bm{v}}}
\newcommand{\bldc}{{\bm{c}}}
\newcommand{\bldx}{{\bm{x}}}
\newcommand{\bldy}{{\bm{y}}}
\newcommand{\blde}{{\bm{e}}}
\newcommand{\code}{{\mathcal{C}}}
\newcommand{\ico}{{\text{\normalfont ico}}}
\newcommand{\dod}{{\text{\normalfont dod}}}

\begin{document}
\title{A New Class of Geometric Analog Error Correction Codes for Crossbar Based In-Memory Computing} 

\author{%

\IEEEauthorblockN{%
\begin{tabular}{c}
Ziyuan Zhu \\
University of California, San Diego \\
\texttt{ziz050@ucsd.edu}
\end{tabular}
\hspace{1.5em}
\begin{tabular}{c}
Changcheng Yuan \\
Texas A\&M University \\
\texttt{ericycc@tamu.edu}
\end{tabular}
\hspace{1.5em}
\\[0.8em]
\begin{tabular}{c}
Ron M. Roth \\
Technion \\
\texttt{ronny@cs.technion.ac.il}
\end{tabular}
\hspace{1.5em}
\begin{tabular}{c}
Paul H. Siegel \\
University of California, San Diego \\
\texttt{psiegel@ucsd.edu}
\end{tabular}
\hspace{1.5em}
\begin{tabular}{c}
Anxiao Jiang \\
Texas A\&M University \\
\texttt{ajiang@cse.tamu.edu}
\end{tabular}
}

}

\maketitle

\begin{abstract}
Analog error correction codes have been proposed for analog in-memory computing on resistive crossbars, which can accelerate vector–matrix multiplication for machine learning. Unlike traditional communication or storage channels, this setting involves a mixed noise model with small perturbations and outlier errors. A number of analog codes have been proposed for handling a single outlier, and several constructions have also been developed to address multiple outliers. However, the set of available code families remains limited, covering only a narrow range of code lengths and dimensions.
In this paper, we study a recently proposed family of geometric codes capable of handling multiple outliers, and develop a geometric analysis that characterizes their $m$-height profiles.
\end{abstract}

\section{Introduction}
Analog in-memory computing is a cutting-edge technology that integrates data storage and computation directly within memory cells, enabling significant acceleration of deep neural network (DNN) computations~\cite{Sebastian2020, Wang2022}. The primary motivation for analog in-memory computing is to overcome the "von Neumann bottleneck" by avoiding the need for massive data transfers between processors and memory \cite{Wong2015, Gallo2023, Zhang2023}. This approach promises substantial improvements in speed and energy efficiency by exploiting the vector--matrix multiplications within crossbar architectures \cite{Roth2022}, a fundamental operation in DNNs. 

Analog ECCs have been proposed to address the challenge of computing against errors~\cite{Roth2020, Roth2023}. These codes are specifically designed to handle errors in codewords transmitted through channels that introduce two primary types of additive noise: limited-magnitude errors (LMEs) and unlimited-magnitude errors (UMEs). LMEs are small but widespread, arising from effects like cell-programming noise and random read/write disturbances in non-volatile memory (NVM) arrays. In contrast, UMEs, such as stuck cells or short cells, occur less frequently but can be significantly more disruptive. While DNNs can often tolerate minor, distributed noise, they are particularly vulnerable to large, isolated errors, making robust error correction essential for reliable in-memory computation~\cite{Sebastian2020}. 

Some existing analog ECC designs primarily target detection of a single UME~\cite{Roth2020, Roth2023}, while others address multiple UMEs~\cite{Wei2024}. 
In addition, some promising short codes that handle multiple UMEs have been obtained via genetic algorithms~\cite{Anxiao_LP}. More recently, \cite{Roth2026} proposed a new class of geometric codes including polygonal codes, dual polygonal codes, polyhedral codes, and dual polyhedral codes together with a linear-programming analysis of their error-handling capability. 
In this paper, we develop an alternative geometric analysis for the dual polygonal codes and dual polyhedral codes, and characterize their $m$-height profiles, recovering the same results as in \cite{Roth2026}. Extending this geometric approach to polygonal codes and polyhedral codes remains an open problem. 

\section{Preliminaries}
\label{sec:preliminary}

\subsection{Height profile of analog ECC}

Let \(\mathcal{C}\) be a linear \([n,k]\) code over \(\mathbb{R}\) (namely, a $k$-dimensional subspace of \(\mathbb{R}^n\)) that is generated by (the rows of) a real $k\times n$ matrix $G=[\bldg_0,\dots,\bldg_{n-1}]$, where $\bldg_i$ is a $k \times 1$ vector. Each codeword has the form $\bldc=\bldu G$, where $\bldu\in\mathbb{R}^k$ is an arbitrary input vector and $\bldc\in\mathbb{R}^n$. For a codeword $\bldc=(c_0,c_1,\ldots,c_{n-1})\in\mathcal{C}$, consider the magnitudes $|c_0|,|c_1|,\ldots,|c_{n-1}|$. 
Let
\[
|c_{(0)}|\ge |c_{(1)}|\ge\cdots\ge |c_{(n-1)}|
\]
denote the elements of $\{|c_j|\}_{j=0}^{n-1}$ sorted in nonincreasing order.

The $m$-height of a nonzero codeword $\bldc$ is defined as the ratio of its largest-magnitude entry to its $(m+1)$-th largest-magnitude entry, i.e.,
\[
h_m(\bldc)=\frac{|c_{(0)}|}{|c_{(m)}|}.
\]
If $c_{(m)}=0$, then $h_m(\bldc)$ is defined to be $+\infty$.  
The $m$-height of the code $\mathcal{C}$ is then
\[
h_m(\mathcal{C})=\max_{\bldc\in\mathcal{C}\setminus\{\boldsymbol{0}\}} h_m(\bldc).
\]
Theorem~2 of~\cite{Anxiao_LP} gives a general method to compute $h_m(\mathcal{C})$ by solving a family of linear programs, and a simplified version of this approach is provided in \cite{Roth2026}.  
 Define \([n\rangle\) as \(\{0, 1, \dots, n-1\}\).
The sequence of $m$-heights $h_m(\mathcal{C}), m \in [n\rangle$, is called the height profile of the code $\mathcal{C}$.

\subsection{Error model and error-handling capability of analog ECC}

Let $\delta$ and $\Delta$ be positive real thresholds satisfying $\Delta > \delta > 0$.
An error vector $\boldsymbol{\varepsilon}=(\varepsilon_0,\varepsilon_1,\ldots,\varepsilon_{n-1})\in\mathbb{R}^n$
is referred to as an LME vector if $\varepsilon_i\in[-\delta,\delta]$ for all $i\in[n\rangle$.
Let $\blde=(e_0,e_1,\ldots,e_{n-1})\in\mathbb{R}^n$ denote a UME vector representing outliers.  
A noisy received word $\bldy=(y_0,\ldots,y_{n-1})\in\mathbb{R}^n$ is given by $\bldy=\bldc+\boldsymbol{\varepsilon}+\blde$.

The vector $\blde$ is assumed to be sparse, since it models outlier faults caused by rare events such as stuck cells or shorted cells in the array.
Unlike Gaussian noise, such faults occur infrequently, but when they do occur, they typically have large magnitudes.  
An outlier is defined by \( |e_i|>0 \). Among such outliers, \textbf{significant} outliers are those satisfying \( |e_i|>\Delta \).

Following~\cite{Roth2020, Roth2023}, we define
the support of an outlier vector $\blde$  as  
\[
\operatorname{Supp}_{0}(\blde):=\{i\in [n\rangle: |e_i|>0\}.
\]
Similarly, 
define the $\Delta$-support of $\blde$ as
\[
\operatorname{Supp}_{\Delta}(\blde):=\{i\in [n\rangle: |e_i|>\Delta\}.
\]
The error-handling capability of analog ECC differs conceptually from that of classical error-correcting codes over finite fields because of the nature of the analog error model. For nonnegative integers $\tau, \sigma$, not both zero,  we say that a length-$n$ analog code $\mathcal{C}$ is $\tau$-error locating and $(\tau+\sigma)$-error detecting if there is a decoder
\[
\mathcal{D}: \mathbb{R}^n  \rightarrow  \mathcal{C} \; \cup \; \{\text{``e''}\}
\]
with the following property.

\begin{enumerate}
\renewcommand{\labelenumi}{(\roman{enumi})}
\item $\operatorname{Supp}_{\Delta}({\blde})
\subseteq
\mathcal{D}({\bldy})
\subseteq
\operatorname{Supp}_{0}(\blde),
\quad
\text{if} \; | \operatorname{Supp}_{0}(\blde)|\leq  \tau .$

\item $\mathcal{D}(\bldy) {=}  \text{``e''}
\;  \text{or}  \;
\operatorname{Supp}_{\Delta}(\blde)
\subseteq
\mathcal{D}(\bldy)\neq  \text{``e''},
\quad
\text{otherwise}.$

\end{enumerate}

In other words,  if $\blde$ has no more than $\tau$ UMEs, the decoder output is a subset of the support of $\blde$ that contains all significant UMEs. Otherwise, the decoder output is either the error flag ``e'' or a set of positions that includes all significant outlying errors of $\blde$. Note that in the latter case, the set may include ``false alarms'' that are positions outside the support of $\blde$.

The proof of Theorem~1 in~\cite{Roth2020, Roth2023} introduced a particular decoder for locating $\tau$ significant UMEs and detecting $\tau+\sigma$ UMEs. Specifically, if   $\blde$ is a UME vector with $|\operatorname{Supp}_{0}(\blde)| \le \tau+\sigma$, then the decoder $\mathcal{D}$ operates as follows:
\begin{enumerate}
\renewcommand{\labelenumi}{(\roman{enumi})}
\item The output $\mathcal{D}(\bldy)$ is ``e'' if there is no compatible UME vector $\blde'$ such that $|\operatorname{Supp}_{0}(\blde')| \le \tau$.

\item Otherwise, the output $\mathcal{D}(\bldy)$ is a set of size at most $\tau$, defined as the intersection of the supports of all compatible UME vectors $\blde'$ with $|\operatorname{Supp}_{0}(\blde')| \le \tau$. By Lemma~2 in~\cite{Roth2023}, this set is guaranteed to contain $\operatorname{Supp}_{\Delta}(\blde)$.
\end{enumerate}

The following example further illustrates the behavior of the decoder for the $[n,1]$ repetition code over $\mathbb{R}$ discussed in Example~1 of~\cite{Roth2023}.

\begin{example}
Consider the repetition code
\[
\mathcal{C}=\{(a,a,\dots,a)\in\mathbb{R}^{12}: a\in\mathbb{R}\},
\]
with parameters
\[
n=12,\quad k=1,\quad \delta=1,\quad
\Delta=4,\quad \tau=5,\quad \sigma=1.
\]
Assume that the transmitted input is $u=0$, so the transmitted codeword is
\[
\bldc=(0,0,\dots,0).
\]
The received vector is
\[
\bldy=\bldc+\blde+\boldsymbol{\varepsilon},
\]
where $|\varepsilon_i|\le 1$ for all $i\in [12\rangle$.

We consider the following three cases.

\noindent\textbf{(i) A case with $|\operatorname{Supp}_{0}(\blde)|\le \tau$.}
Let
\[
\blde=
(5,\,4.5,\,2.6,\,2,\,
0,\,0,\,0,\,0,\,
0,\,0,\,0,\,0),
\]
and let $\boldsymbol{\varepsilon}=\boldsymbol{0}$. Then
\[
\bldy=
(5,\,4.5,\,2.6,\,2,\,
0,\,0,\,0,\,0,\,
0,\,0,\,0,\,0),
\]
and
\[
\operatorname{Supp}_{0}(\blde)=\{0,1,2,3\},
\qquad
\operatorname{Supp}_{\Delta}(\blde)=\{0,1\}.
\]
The possible compatible codewords $\bldc'=(a,\dots,a)$ with a compatible UME vector $\blde'$ satisfying
\[
|\operatorname{Supp}_{0}(\blde')|\le 5
\]
are those with $a\in[-1,1]$.

If $a\in[-1,1)$, then
\[
\operatorname{Supp}_{0}(\blde')=\{0,1,2,3\}.
\]
If $a=1$, then
\[
\operatorname{Supp}_{0}(\blde')=\{0,1,2\}.
\]
Hence, the possible supports are
\[
\{0,1,2,3\}
\qquad \text{and} \qquad
\{0,1,2\},
\]
and therefore
\[
\mathcal{D}(\bldy)=\{0,1,2\}.
\]

\noindent\textbf{(ii) A case with $\tau<|\operatorname{Supp}_{0}(\blde)|\le \tau+\sigma$.}
Let
\[
\blde=
(5,\,5,\,4.6,\,4.3,\,
2.7,\,2.4,\,0,\,0,\,
0,\,0,\,0,\,0),
\]
and let $\boldsymbol{\varepsilon}=\boldsymbol{0}$. Then
\[
\bldy=
(5,\,5,\,4.6,\,4.3,\,
2.7,\,2.4,\,0,\,0,\,
0,\,0,\,0,\,0),
\]
and
\[
\operatorname{Supp}_{0}(\blde)=\{0,1,2,3,4,5\},
\]
\[
\operatorname{Supp}_{\Delta}(\blde)=\{0,1,2,3\}.
\]
There is no compatible codeword $\bldc'=(a,\dots,a)$ such\\
that the corresponding compatible UME vector $\blde'$ satisfies
\[
|\operatorname{Supp}_{0}(\blde')|\le 5.
\]
Therefore,
\[
\mathcal{D}(\bldy)=\text{``e''}.
\]

\noindent\textbf{(iii) A case with $|\operatorname{Supp}_{0}(\blde)|>\tau+\sigma$.}
Let
\[
\blde=
(5,\,5,\,5,\,5,\,
5,\,5,\,5,\,3,\,
3,\,3,\,3,\,3),
\]
and let $\boldsymbol{\varepsilon}=\boldsymbol{0}$. Then
\[
\bldy=
(5,\,5,\,5,\,5,\,
5,\,5,\,5,\,3,\,
3,\,3,\,3,\,3),
\]
and
\[
|\operatorname{Supp}_{0}(\blde)|=12>\tau+\sigma=6.
\]
Taking
\[
\bldc'=(4,4,\dots,4),
\]
we have
\[
|y_i-4|\le 1,\qquad \forall i\in [12\rangle,
\]
so there exists a compatible decomposition with
\[
\blde'=\boldsymbol{0}.
\]
Hence,
\[
\mathcal{D}(\bldy)=\varnothing.
\]
\end{example}

\section{Dual Polygonal Codes}
\label{sec:2D_code}

In this section, we construct a class of $k=2$ analog codes whose generator matrices consist of evenly spaced unit vectors over a half circle, which is the same as the code in Example 4 of~\cite{Roth2026}.

Let $k=2$ and let $G=[\bldg_0,\dots,\bldg_{n-1}]\in\mathbb{R}^{2\times n}$ consist of unit columns
\[
\bldg_j=\begin{bmatrix}\cos\theta_j\\[2pt]\sin\theta_j\end{bmatrix},
\qquad \theta_j\in[0,\pi).
\]
Choose $\theta_j=\frac{\pi}{n}\,j$, $j=0,1,\dots,n-1$. Then the set $\{\bldg_j\}$ forms an evenly spaced set of vectors over a half circle. An example with $n=3$ is illustrated in Fig.~\ref{fig:half_circle_example}. The dashed lines mark the antipodal directions of the generator vectors.

\begin{figure}[htbp]
\centering
\begin{tikzpicture}[scale=1.2]
  \draw (0,0) circle (1.5);

  \fill[red!25] (0,0) -- (0:1.5) arc (0:30:1.5) -- cycle;

  \foreach \j/\ang in {0/0,1/60,2/120}{
    \draw[->] (0,0) -- (\ang:1.5) node[pos=1.15] {$\bldg_{\j}$};
  }

  \foreach \ang in {180,240,300}{
    \draw[dashed] (0,0) -- (\ang:1.5);
  }
\end{tikzpicture}
\caption{\centering An example of dual polygonal codes for $n=3$.}
\label{fig:half_circle_example}
\end{figure}
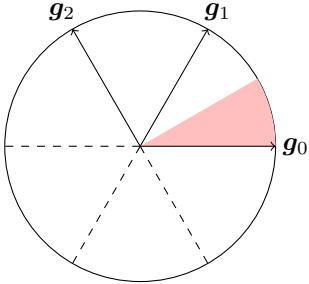

For a unit information vector $\bldu(\alpha)=(\cos\alpha,\sin\alpha)^\top$, the resulting codeword is $\bldc(\alpha)=\bldu(\alpha)^\top G\in\mathbb{R}^n$, whose entries are $c_j(\alpha)=\cos(\theta_j-\alpha)$. Let $c_{(0)}(\alpha)\ge c_{(1)}(\alpha)\ge\cdots\ge c_{(n-1)}(\alpha)$ denote the elements of $\{|c_j(\alpha)|\}_{j=0}^{n-1}$ sorted in nonincreasing order. Then the $m$-height of the code is
\[
h_m(\mathcal{C})=\sup_{\alpha\in[0,2\pi)}\frac{c_{(0)}(\alpha)}{c_{(m)}(\alpha)}.
\]

Information directions $\alpha\in\bigl[0,\tfrac{\pi}{2n}\bigr]$ generate a subset of codewords. Due to symmetry, all other codewords can be obtained from this subset by suitable permutations and sign changes of the coordinates, which do not affect the $m$-height. As a result, the interval $\alpha\in\bigl[0,\tfrac{\pi}{2n}\bigr]$ constitutes a complete and sufficient domain for analyzing the $m$-height of the code, as illustrated in Fig.~\ref{fig:half_circle_example}. 
Hence, the $m$-height simplifies to
\[
h_m(\mathcal{C})
= \sup_{\alpha\in[0,2\pi)}\frac{c_{(0)}(\alpha)}{c_{(m)}(\alpha)}
= \sup_{\alpha\in[0,\frac{\pi}{2n}]}\frac{c_{(0)}(\alpha)}{c_{(m)}(\alpha)}.
\]

\begin{lemma}
\label{lemma:Order_2D}
Let $n\ge 2$ and fix $\alpha\in\bigl[0,\tfrac{\pi}{2n}\bigr]$.
Then the order statistics $c_{(k)}(\alpha)$ of $\lvert c_j(\alpha)\rvert$ are attained at the indices
\[
c_{(k)}(\alpha)=|c_{j_k}(\alpha)|,\qquad k=0,1,\ldots,n-1,
\]
where
\[
j_k=
\begin{cases}
0, & k=0,\\[4pt]
\dfrac{k+1}{2}, & \text{$k$ odd},\\[6pt]
n-\dfrac{k}{2}, & \text{$k$ even and } k\ge2.
\end{cases}
\]
\end{lemma}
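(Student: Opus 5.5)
The plan is to turn the statement into a comparison of angular distances to the nearest multiple of $\pi$. Since $\lvert\cos x\rvert$ is $\pi$-periodic and even, we have $\lvert c_j(\alpha)\rvert=\cos d_j(\alpha)$, where
\[
d_j(\alpha)=\min_{\ell\in\mathbb{Z}}\bigl|\theta_j-\alpha-\ell\pi\bigr|\in\bigl[0,\tfrac{\pi}{2}\bigr].
\]
Because $\cos$ is strictly decreasing on $[0,\pi/2]$, ordering the $\lvert c_j(\alpha)\rvert$ nonincreasingly is the same as ordering the $d_j(\alpha)$ nondecreasingly. So I will compute $d_j(\alpha)$ explicitly for $\alpha\in[0,\frac{\pi}{2n}]$ and then sort.

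First, compute the distances. For $j\in[n\rangle$, the two candidate distances are $\frac{\pi j}{n}-\alpha$, the distance to $0$ (nonnegative when $j\ge1$), and $\pi-\frac{\pi j}{n}+\alpha=\frac{\pi(n-j)}{n}+\alpha$, the distance to $\pi$. This gives $d_0=\alpha$. For $1\le j\le n-1$,
\[
d_j=\min\Bigl\{\tfrac{\pi j}{n}-\alpha,\;\tfrac{\pi(n-j)}{n}+\alpha\Bigr\}.
\]
It is convenient to write the indices as $j=i$ and $j=n-i$. For each $i\ge1$, the "right" family gives values $a_i:=\frac{\pi i}{n}-\alpha$, and the "left" family gives values $b_i:=\frac{\pi i}{n}+\alpha$. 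I must check that the relevant branch of the min is the one taken. This holds because $\frac{\pi j}{n}-\alpha\le \frac{\pi(n-j)}{n}+\alpha$ exactly when $j\le \frac n2+\frac{n\alpha}{\pi}$. Ties occur only near $j\approx n/2$, and they do not affect the value of the multiset $\{d_j\}$ once both families are listed.

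Next, sort the two families. Since $0\le\alpha\le\frac{\pi}{2n}$, the values interleave:
\[
\alpha\le a_1\le b_1\le a_2\le b_2\le\cdots,
\]
because $b_i-a_i=2\alpha\ge0$ and $a_{i+1}-b_i=\frac{\pi}{n}-2\alpha\ge0$. The smallest value is $d_0=\alpha$. It also satisfies $\alpha\le a_1$, since $a_1-\alpha=\frac{\pi}{n}-2\alpha\ge0$. Hence the sorted sequence of indices is $0,1,n-1,2,n-2,\dots$. This means position $k$ odd corresponds to $j=\frac{k+1}{2}$, and position $k\ge2$ even corresponds to $j=n-\frac k2$, which is exactly the formula for $j_k$. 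At the endpoints $\alpha=0$ and $\alpha=\frac{\pi}{2n}$ there are ties, but the stated assignment is still a valid realization of the order statistics, since it only claims $c_{(k)}=\lvert c_{j_k}\rvert$.

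The main obstacle is the bookkeeping near the middle index. I need to confirm that the interleaving pattern, applied for $k$ up to $n-1$, enumerates every index in $[n\rangle$ exactly once and assigns each the correct branch of the min. I will handle this by splitting on the parity of $n$. When $n$ is odd, the last position $k=n-1$ is even and gives $j=n-\frac{n-1}{2}=\frac{n+1}{2}$. When $n$ is even, the last position $k=n-1$ is odd and gives $j=\frac n2$. In each case I will verify that the index sets $\{1,\dots,\lceil (n-1)/2\rceil\}$ (right family) and $\{n-1,\dots\}$ (left family) partition $\{1,\dots,n-1\}$. I will also check that for these indices the distances $a_i$ and $b_i$ are at most $\pi/2$, or tie appropriately with the other branch. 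For example, at $j=\frac n2$ the value $\frac{\pi}{2}-\alpha$ is indeed the minimum.
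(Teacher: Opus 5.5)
Your proposal is correct and follows essentially the same route as the paper. Both arguments write $|c_j(\alpha)|$ as the cosine of the distance from $\theta_j-\alpha$ to the nearest multiple of $\pi$ (the paper does this through the representative map $P$ into $(-\pi/2,\pi/2]$), split the distances into the two families $\frac{\pi i}{n}-\alpha$ and $\frac{\pi i}{n}+\alpha$, and use the interleaving inequalities to read off the index order $0,1,n-1,2,n-2,\dots$; your explicit check of which branch of the minimum applies, and of the parity-dependent partition of $\{1,\dots,n-1\}$, is somewhat more careful than the paper's treatment.
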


\begin{proof}
Recall that
\[
c_j(\alpha)=\cos\!\Bigl(\tfrac{\pi j}{n}-\alpha\Bigr),\qquad j=0,1,\ldots,n-1,
\]
and by definition $c_{(k)}(\alpha)$ denotes the $(k+1)$-th largest value among
$\bigl\{\lvert c_j(\alpha)\rvert\bigr\}_{j=0}^{n-1}$.

Since
$| \cos x|$ is $\pi$-periodic and even, it is convenient to reduce to the principal strip $(-\tfrac{\pi}{2},\tfrac{\pi}{2}]$.
Define
\[
P(x) \;:=\; x-\pi\Big\lfloor\frac{x+\tfrac{\pi}{2}}{\pi}\Big\rfloor
\in\bigl(-\tfrac{\pi}{2},\tfrac{\pi}{2}\bigr],
\]
so that
\[
\bigl|\cos P(x)\bigr| = \bigl|\cos x\bigr|
\quad\text{for all }x\in\mathbb{R}.
\]
Let
\[
\begin{aligned}
\mathcal{M}
&:= \bigl\{-\alpha + j\tfrac{\pi}{n} : j=0,1,\ldots,n-1\bigr\},\\
\mathcal{L}
&:= P(\mathcal{M})
 = \bigl\{\,P(-\alpha + j\tfrac{\pi}{n}) : j=0,1,\ldots,n-1\,\bigr\}.
\end{aligned}
\]
Then
\[
\begin{aligned}
& \bigl \{\lvert c_j(\alpha)\rvert\bigr\}_{j=0}^{n-1}
 = \bigl\{\lvert\cos x\rvert : x\in\mathcal{M}\bigr\}\\
&= \bigl\{\lvert\cos x\rvert : x\in\mathcal{L}\bigr\} 
 = \bigl\{\cos x : x\in\mathcal{L}\bigr\},
\end{aligned}
\]
because $\mathcal{L}\subset(-\tfrac{\pi}{2},\tfrac{\pi}{2}]$ and $\cos x>0$
on this interval.

On $[0,\tfrac{\pi}{2}]$ the function $x\mapsto\cos x$ is strictly decreasing,
and $\cos x$ is even. Hence for $|\alpha_1|,|\alpha_2|\le\tfrac{\pi}{2}$ we have
\[
\cos\alpha_1 \ge \cos\alpha_2
\iff |\alpha_1|\le|\alpha_2|.
\]
Therefore, within $\mathcal{L}$, ordering the values
$\{\cos x:x\in\mathcal{L}\}$ in nonincreasing order is equivalent to ordering
the absolute values $\{|x|:x\in\mathcal{L}\}$ in nondecreasing order.

We now describe $\mathcal{L}$ explicitly. First, for
$j=0,1,\ldots,\lfloor\tfrac{n}{2}\rfloor$ we have
\[
-\alpha+j\tfrac{\pi}{n}\in\bigl(-\tfrac{\pi}{2},\tfrac{\pi}{2}\bigr],
\]
\[
P\!\Bigl(-\alpha+j\tfrac{\pi}{n}\Bigr)
= -\alpha + j\tfrac{\pi}{n}.
\]
These contribute the points
\[
-\alpha,\ \tfrac{\pi}{n}-\alpha,\ \tfrac{2\pi}{n}-\alpha,\ 
\tfrac{3\pi}{n}-\alpha,\ \ldots
\]
up to the last $j$ for which the expression stays in
$(-\tfrac{\pi}{2},\tfrac{\pi}{2}]$.

Next, consider $j=\tfrac{n}{2}+r$ (if $n$ is even) or
$j=\tfrac{n-1}{2}+1+r$ (if $n$ is odd); in either case we can write
\[
j = n-r,\qquad r=1,2,\ldots,\Bigl\lfloor\tfrac{n-1}{2}\Bigr\rfloor.
\]
Then
\[
-\alpha + j\tfrac{\pi}{n}
= -\alpha + (n-r)\tfrac{\pi}{n}
= \pi - \Bigl(\tfrac{\pi r}{n}+\alpha\Bigr),
\]
which lies in $(\tfrac{\pi}{2},\pi)$ for the relevant $r$. Applying $P$ gives
\[
P\!\Bigl(-\alpha + (n-r)\tfrac{\pi}{n}\Bigr)
= -\Bigl(\tfrac{\pi r}{n}+\alpha\Bigr),
\]
so these contribute the points
\[
-\Bigl(\tfrac{\pi}{n}+\alpha\Bigr),\ 
-\Bigl(\tfrac{2\pi}{n}+\alpha\Bigr),\ 
-\Bigl(\tfrac{3\pi}{n}+\alpha\Bigr),\ \ldots.
\]

Collecting both parts, the elements of $\mathcal{L}$ in
$(-\tfrac{\pi}{2},\tfrac{\pi}{2}]$ can be written as
\[
-\alpha,\ \tfrac{\pi}{n}-\alpha,\ \tfrac{2\pi}{n}-\alpha,\ \ldots
\quad\text{and}\quad
-\Bigl(\tfrac{\pi}{n}+\alpha\Bigr),\ 
-\Bigl(\tfrac{2\pi}{n}+\alpha\Bigr),\ \ldots
\]
hence their absolute values form the interleaving
sequence
\[
\alpha,\ 
\tfrac{\pi}{n}-\alpha,\ 
\tfrac{\pi}{n}+\alpha,\ 
\tfrac{2\pi}{n}-\alpha,\ 
\tfrac{2\pi}{n}+\alpha,\ 
\tfrac{3\pi}{n}-\alpha,\ 
\tfrac{3\pi}{n}+\alpha,\ \ldots,
\]
because for every integer $r\ge1$ and every
$\alpha\in[0,\tfrac{\pi}{2n}]$,
\[
r\tfrac{\pi}{n}-\alpha
\;\le\;
r\tfrac{\pi}{n}+\alpha
\;\le\;
(r+1)\tfrac{\pi}{n}-\alpha.
\]

We now map these distances back to indices $j$. By construction,
\begin{itemize}
  \item The smallest distance $\alpha$ corresponds to $j=0$.
  \item The next distance $\tfrac{\pi}{n}-\alpha$ corresponds to $j=1$.
  \item The distance $\tfrac{\pi}{n}+\alpha$ corresponds to $j=n-1$.
  \item The distance $\tfrac{2\pi}{n}-\alpha$ corresponds to $j=2$.
  \item The distance $\tfrac{2\pi}{n}+\alpha$ corresponds to $j=n-2$.
\end{itemize}
and so on. In general, the interleaving pattern of distances yields the
index sequence
\[
0,\ 1,\ n-1,\ 2,\ n-2,\ 3,\ n-3,\ \ldots,
\]
i.e.,
\[
j_k=
\begin{cases}
0, & k=0,\\[4pt]
\dfrac{k+1}{2}, & \text{$k$ odd},\\[6pt]
n-\dfrac{k}{2}, & \text{$k$ even and } k\ge2.
\end{cases}
\]

Since $c_{(k)}(\alpha)$ is, by definition, the $(k+1)$-th largest element of
$\{\lvert c_j(\alpha)\rvert\}_{j=0}^{n-1}$, and we have just shown that
the corresponding indices appear in the order $j_0,j_1,j_2,\ldots$ given
above, which proves the lemma.
\end{proof}

Thus, by Lemma~\ref{lemma:Order_2D}, the $m$-height computation can be simplified as follows.
\[
h_m(\mathcal{C})
=\sup_{\alpha\in[0,\frac{\pi}{2n}]}\frac{c_{(0)}(\alpha)}{c_{(m)}(\alpha)}
=\sup_{\alpha\in[0,\frac{\pi}{2n}]}\frac{\cos\alpha}{\cos\theta_{j_m}}.
\]

For the dual polygonal code, since $k=2$, the Singleton bound implies that  there exists a codeword with $c_{(n-1)}=0$, so 
$h_{n-1}( \mathcal{C})=\infty$. 
Moreover, 
the direction $\bldu(\alpha)$ can be orthogonal to at most one generator
vector $\bldg_j$, that is, there is at most one $j$ such that
$\cos(\theta_j-\alpha)=0$. Hence at most the smallest order statistic
$c_{(n-1)}(\alpha)$ can be zero, implying that $\mathcal{C}$ is MDS.   In particular, for every $m\le n-2$, the denominator $c_{(m)}(\alpha)$ is strictly positive for all
$\alpha\in[0,\tfrac{\pi}{2n}]$, and the corresponding $m$-height is finite.
We therefore restrict attention to $m\le n-2$. 

\begin{thm}
\label{thm:hm_max_halfcircle_simple}
For the polygonal codes, and $0 < m \le n-2$, the $m$-height achieves its maximum at
\[
\arg\max_{\alpha\in[0,\frac{\pi}{2n}]} h_m(\alpha)
  =
  \begin{cases}
    \alpha=\tfrac{\pi}{2n}, & \text{$m$ even},\\[2pt]
    \alpha=0,              & \text{$m$ odd}.
  \end{cases}
\]
Moreover, the $m$-height is
\[
h_m(\mathcal{C})
=
\begin{cases}
\displaystyle \dfrac{\cos \tfrac{\pi}{2n}}{\cos\bigl((m+1)\tfrac{\pi}{2n}\bigr)}, & \text{$m$ even},\\[10pt]
\displaystyle \dfrac{1}{\cos\bigl((m+1)\tfrac{\pi}{2n}\bigr)},      & \text{$m$ odd}.
\end{cases}
\]
\end{thm}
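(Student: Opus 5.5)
We will use Lemma~\ref{lemma:Order_2D} to write the ratio explicitly on $[0,\tfrac{\pi}{2n}]$ and then maximize a one-variable function. By the lemma, $c_{(0)}(\alpha)=\cos\alpha$. The proof of the lemma gives the reduced angle associated with the $(m+1)$-th largest entry: it is $\tfrac{r\pi}{n}-\alpha$ when $m=2r-1$ is odd, and $\tfrac{r\pi}{n}+\alpha$ when $m=2r$ is even. Since $m\le n-2$, the reduced angle lies in $[0,\tfrac{\pi}{2})$, so the denominator is strictly positive, as noted before the theorem. Therefore
\[
h_m(\alpha)=\frac{\cos\alpha}{\cos\bigl(\tfrac{r\pi}{n}\mp\alpha\bigr)},
\]
with the minus sign for $m$ odd and the plus sign for $m$ even.

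\textbf{Case $m=2r$ even.} The numerator $\cos\alpha$ decreases on $[0,\tfrac{\pi}{2n}]$, but the denominator decreases faster. To make this precise, compute
\[
\frac{d}{d\alpha}\log h_m(\alpha)=-\tan\alpha+\tan\bigl(\tfrac{r\pi}{n}+\alpha\bigr).
\]
This is positive because $\tan$ is increasing on $(-\tfrac{\pi}{2},\tfrac{\pi}{2})$ and $\alpha<\tfrac{r\pi}{n}+\alpha<\tfrac{\pi}{2}$. Here we need $\tfrac{r\pi}{n}+\tfrac{\pi}{2n}<\tfrac{\pi}{2}$, i.e.\ $2r+1<n$, which holds since $m\le n-2$. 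So $h_m$ is increasing and the maximum is attained at $\alpha=\tfrac{\pi}{2n}$, where the denominator becomes $\cos\bigl((2r+1)\tfrac{\pi}{2n}\bigr)=\cos\bigl((m+1)\tfrac{\pi}{2n}\bigr)$.

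\textbf{Case $m=2r-1$ odd.} Now
\[
\frac{d}{d\alpha}\log h_m(\alpha)=-\tan\alpha-\tan\bigl(\tfrac{r\pi}{n}-\alpha\bigr).
\]
Both terms are nonpositive, since $\tfrac{r\pi}{n}-\alpha\ge\tfrac{(2r-1)\pi}{2n}>0$ and this angle is below $\tfrac{\pi}{2}$. So $h_m$ is decreasing and the maximum is at $\alpha=0$, giving $1/\cos\bigl(\tfrac{r\pi}{n}\bigr)=1/\cos\bigl((m+1)\tfrac{\pi}{2n}\bigr)$. Uniqueness of the argmax follows from strict monotonicity, since the derivative is strictly nonzero on the interior of the interval.

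\textbf{Main obstacle.} The key step is to confirm that Lemma~\ref{lemma:Order_2D} gives a single fixed index $j_m$ on the whole interval, so that $h_m$ is one smooth expression with no switching of order statistics. Ties can occur only at the endpoints: at $\alpha=\tfrac{\pi}{2n}$ the angles $\tfrac{r\pi}{n}+\alpha$ and $\tfrac{(r+1)\pi}{n}-\alpha$ coincide. This does not change the value of $c_{(m)}$, so continuity handles it. We must also check the boundary range of $r$ (for $m=n-2$ and $n$ odd or even) so that the index lies within $0,\ldots,n-1$ and the reduced angle stays below $\tfrac{\pi}{2}$. Finally, the supremum reduction to $[0,\tfrac{\pi}{2n}]$ is taken from the symmetry argument stated before the lemma.
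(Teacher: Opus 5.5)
Your proof is correct and follows the paper's route. You use the ordering lemma to fix $c_{(m)}(\alpha)=\cos\bigl(A_{(m)}+(-1)^m\alpha\bigr)$ with $A_{(m)}=\frac{\pi}{n}\lceil m/2\rceil$, show that $h_m$ is strictly monotone on $[0,\frac{\pi}{2n}]$ in the direction set by the parity of $m$, and evaluate at the appropriate endpoint. The only difference is cosmetic: you take the logarithmic derivative and use monotonicity of $\tan$, while the paper differentiates directly and simplifies to $h_m'(\alpha)=(-1)^m\sin A_{(m)}/\cos^2\bigl(A_{(m)}+(-1)^m\alpha\bigr)$.
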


\begin{proof}
From Lemma~\ref{lemma:Order_2D}, the ordered term for $0 < m \le n-2$ satisfies
\[
c_{(m)}(\varphi)=\cos\!\big(A_{(m)}+s_{(m)}\varphi\big),
\]
where
\[
A_{(m)}=\frac{\pi}{n}\Bigl\lceil\frac{m}{2}\Bigr\rceil,
\qquad
s_{(m)}=(-1)^m,
\]
and therefore
\[
h_m(\varphi)=\frac{\cos\varphi}{\cos\!\big(A_{(m)}+s_{(m)}\varphi\big)}.
\]
Taking a derivative and simplifying,
\[
h_m'(\varphi)
=\frac{s_{(m)}\sin A_{(m)}}{\cos^2\!\big(A_{(m)}+s_{(m)}\varphi\big)}.
\]
Since $A_{(m)}\in(0,\tfrac{\pi}{2}]$, the numerator has the same sign as $s_{(m)}$,
and the denominator is strictly positive. Thus,
\[
\operatorname{sgn}\,h_m'(\varphi)
=\operatorname{sgn}(s_{(m)})
=
\begin{cases}
+1,& m~\text{even},\\
-1,& m~\text{odd}.
\end{cases}
\]
Hence $h_m(\varphi)$ is strictly increasing when $m$ is even and strictly
decreasing when $m$ is odd, proving
\[
\arg\max h_m
=
\begin{cases}
\varphi=\tfrac{\pi}{2n},& m~\text{even},\\
\varphi=0,& m~\text{odd}.
\end{cases}
\]

Finally, evaluating at the maximizing point:
\[
h_m(\mathcal{C})
=
\begin{cases}
\displaystyle \dfrac{\cos \tfrac{\pi}{2n}}{\cos\bigl((m+1)\tfrac{\pi}{2n}\bigr)}, & \text{$m$ even},\\[10pt]
\displaystyle \dfrac{1}{\cos\bigl((m+1)\tfrac{\pi}{2n}\bigr)},      & \text{$m$ odd}.
\end{cases}
\]
This completes the proof.
\end{proof}

\section{Dual Polyhedral Codes}
\label{sec:Dual_solid_code}

In this section, we introduce two codes derived from three-dimensional geometric structures: the icosahedron and the dodecahedron. These codes are related to the dual icosahedral code $\code_\ico^\perp$ and dual duodecahedral code $\code_\dod^\perp$ introduced in Examples~5 and~6 of \cite{Roth2026}, respectively.  The codes we analyze differ from those constructions by at most permutations and sign changes of columns and/or rows of the generator matrix, which will not affect the $m$-height profile.

\subsection{Dual icosahedral codes}

The icosahedron has $12$ vertices, $20$ faces, and $30$ edges. By placing its vertices on the sphere, we obtain $6$ symmetric axes $\{\boldsymbol{g}_{\boldsymbol{0}}, \ldots, \boldsymbol{g}_{\boldsymbol{5}}\}$, where each axis corresponds to a pair of antipodal vertices. These six axes can be represented by the following matrix:
\[
G =
\begin{bmatrix}
0 & 0 & 1 & 1 & \varphi & \varphi \\
1 & 1 & \varphi & -\varphi & 0 & 0 \\
\varphi & -\varphi & 0 & 0 & 1 & -1
\end{bmatrix}, \qquad
\varphi = \frac{1+\sqrt{5}}{2}.
\]

When searching for the optimal information vector $\bldu$ that attains the $m$-height, 
only its direction matters (i.e., its magnitude does not affect the $m$-height.)
Therefore, it is sufficient to restrict the analysis to the faces of the solid, which cover all possible directions.
Among the $20$ faces, all are symmetric, and furthermore, each face contains subregions that are 
themselves symmetric. Thus, analyzing a single representative subregion is equivalent to analyzing 
the entire space.

Motivated by this symmetry, we restrict our attention to the subregion whose direction is closest to 
\(\bldg_0\), as illustrated in Fig.~\ref{fig:icos}. We define the smaller triangle as
\[
T := \mathrm{conv}\{\bldv_0,\bldv_1,\bldv_2\}
\]
with
\[
\bldv_0 := \bldg_0,\qquad
\bldv_1 := \frac{\bldg_0+\bldg_2}{2},\qquad
\bldv_2 := \frac{\bldg_0+\bldg_2+\bldg_4}{3}.
\]


\begin{figure}

\tdplotsetmaincoords{70}{65}
\centering
\begin{tikzpicture}[tdplot_main_coords, scale=2.5, rotate=30,
  line cap=round, line join=round]

\begin{scope}[xshift=-2.6cm]

\coordinate (p1)  at (0.,0.,-0.9510565162951536);
\coordinate (p2)  at (0.,0., 0.9510565162951536);
\coordinate (p3)  at (-0.85065080835204,0.,-0.42532540417601994);
\coordinate (p4)  at ( 0.85065080835204,0., 0.42532540417601994);
\coordinate (p5)  at ( 0.6881909602355868,-0.5,-0.42532540417601994);
\coordinate (p6)  at ( 0.6881909602355868, 0.5,-0.42532540417601994);
\coordinate (p7)  at (-0.6881909602355868,-0.5, 0.42532540417601994);
\coordinate (p8)  at (-0.6881909602355868, 0.5, 0.42532540417601994);
\coordinate (p9)  at (-0.2628655560595668,-0.8090169943749475,-0.42532540417601994);
\coordinate (p10) at (-0.2628655560595668, 0.8090169943749475,-0.42532540417601994);
\coordinate (p11) at ( 0.2628655560595668,-0.8090169943749475, 0.42532540417601994);
\coordinate (p12) at ( 0.2628655560595668, 0.8090169943749475, 0.42532540417601994);

\def\FrontFaces{
{1,5,9},{1,5,6},{4,6,12},{2,4,12},{2,7,11},
{4,5,11},{5,9,11},{7,9,11},{2,11,4},{4,5,6}
}

\def\BackFaces{
{2,12,8},{2,8,7},{5,9,1},{6,5,1},
{10,6,1},{3,10,1},{9,3,1},{12,10,8},{8,3,7}
}

\foreach \poly in {
{3,10,1},
{12,10,8},
}{
  \draw[gray!40, ultra thin]
    plot[samples at=\poly, variable=\x] (p\x) -- cycle;
}


\draw[gray!40, ultra thin] (p12) -- (p8) -- (p2);

\draw[gray!40, ultra thin] (p2) -- (p8) -- (p7);

\draw[gray!40, ultra thin] (p6) -- (p10) -- (p1);

\draw[gray!40, ultra thin] (p9) -- (p3) -- (p1);

\coordinate (v2) at ($(p4)!0.5!(p11)$); 
\coordinate (v3) at ($0.333333*(p4)+0.333333*(p5)+0.333333*(p11)$); 

\filldraw[fill=gray!35, draw=black, line width=0.6pt]
  (p4) -- (v2) -- (v3) -- cycle;

\draw[black, line width=0.9pt] (p4) -- (v2) -- (v3) -- cycle;

\fill (v2) circle[radius=0.8pt];
\fill (v3) circle[radius=0.8pt];

\draw[gray!40, ultra thin]
  plot[samples at={8,3,7}, variable=\x] (p\x) -- cycle;

\foreach \a/\b/\c in {1/5/9, 1/5/6, 4/6/12, 2/4/12, 2/7/11,
                     4/5/11, 5/9/11, 7/9/11, 2/11/4, 4/5/6}{
  \draw[very thick, fill=white, fill opacity=0.15]
    (p\a) -- (p\b) -- (p\c) -- cycle;
}

\fill (p4)  circle[radius=1.2pt];
\fill (p5)  circle[radius=1.2pt];
\fill (p11) circle[radius=1.2pt];

\node[font=\footnotesize, xshift=8pt, yshift=2pt]  at (p4)  {$\bldg_0$};
\node[font=\footnotesize, xshift=10pt, yshift=1.5pt]  at (p5)  {$\bldg_4$};
\node[font=\footnotesize, xshift=-2.7pt, yshift=10pt] at (p11) {$\bldg_2$};

\node[font=\footnotesize, xshift=-7pt, yshift=3pt] at (v2) {$\bldv_1$};
\node[font=\footnotesize, xshift=0pt, yshift=-7pt] at (v3) {$\bldv_2$};

\end{scope}
\end{tikzpicture}
\caption{Icosahedron with a shaded triangular region indicating the fundamental search space.}
\label{fig:icos}
\end{figure}
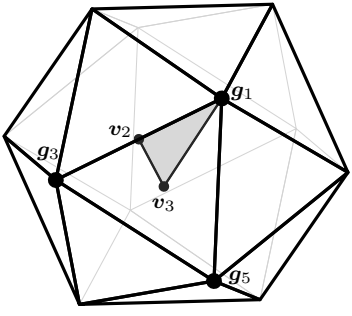

Any \(\bldx\in T\) can be written in barycentric form
\[
\bldx(u,v) := u \bldv_0 + v \bldv_1 + (1-u-v)\bldv_2.
\]
The parameter domain is the standard triangle
\[
D := \{(u,v)\in\mathbb{R}^2 : u\ge 0,\ v\ge 0,\ u+v\le 1\}.
\]

\begin{lemma}
\label{lemma:order_icos}
For every \(\bldx\in T\), the six absolute values \(|\bldx\cdot \bldg_i|\), \(i=0,\dots,5\), satisfy the global order
\[
|\bldx\cdot \bldg_0|
\;\ge\;
|\bldx\cdot \bldg_2|
\;\ge\;
|\bldx\cdot \bldg_4|
\;\ge\;
|\bldx\cdot \bldg_3|
\;\ge\;
|\bldx\cdot \bldg_1|
\;\ge\;
|\bldx\cdot \bldg_5|.
\]
\end{lemma}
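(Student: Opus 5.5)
The plan is to use linearity and convexity. For each fixed $i$, the map $\bldx\mapsto \bldx\cdot\bldg_i$ is linear, and $T$ is the convex hull of $\bldv_0,\bldv_1,\bldv_2$. Each inequality in the chain will therefore be rewritten as the nonnegativity of one or two \emph{linear} functionals on $T$, and a linear functional is nonnegative on $T$ as soon as it is nonnegative at the three vertices. Since only signs matter, I may evaluate at the positive multiples $\bldv_0=\bldg_0$, $2\bldv_1=\bldg_0+\bldg_2$ and $3\bldv_2=\bldg_0+\bldg_2+\bldg_4$.

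First I compute the Gram matrix of the six columns, using $\varphi^2=\varphi+1$. The diagonal entries are $\bldg_i\cdot\bldg_i=\varphi+2$, and every off-diagonal entry equals $\pm\varphi$. Explicitly, $\bldg_0\cdot\bldg_2=\bldg_0\cdot\bldg_4=\bldg_2\cdot\bldg_4=\varphi$, and
\[
\bldg_0\cdot\bldg_1=\bldg_0\cdot\bldg_3=\bldg_0\cdot\bldg_5=-\varphi,\quad
\bldg_2\cdot\bldg_1=\varphi,\quad \bldg_2\cdot\bldg_3=-\varphi,\quad \bldg_2\cdot\bldg_5=\varphi,
\]
\[
\bldg_4\cdot\bldg_1=-\varphi,\quad \bldg_4\cdot\bldg_3=\varphi,\quad \bldg_4\cdot\bldg_5=\varphi .
\]
From this table, the vectors of values $(\bldx\cdot\bldg_0,\dots,\bldx\cdot\bldg_5)$ at the three scaled vertices are
\[
(\varphi+2,-\varphi,\varphi,-\varphi,\varphi,-\varphi),\quad
(2\varphi+2,0,2\varphi+2,-2\varphi,2\varphi,0),\quad
(3\varphi+2,-\varphi,3\varphi+2,-\varphi,3\varphi+2,\varphi).
\]

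Next I fix signs on $T$. The functionals $\bldx\cdot\bldg_0$, $\bldx\cdot\bldg_2$ and $\bldx\cdot\bldg_4$ are positive at all three vertices, so they are nonnegative on $T$. The functionals $\bldx\cdot\bldg_1$ and $\bldx\cdot\bldg_3$ are nonpositive at all three vertices, so they are nonpositive on $T$. On $T$ the first four inequalities of the chain therefore become the linear conditions
\[
\bldx\cdot(\bldg_0-\bldg_2)\ge0,\quad
\bldx\cdot(\bldg_2-\bldg_4)\ge0,\quad
\bldx\cdot(\bldg_4+\bldg_3)\ge0,\quad
\bldx\cdot(\bldg_1-\bldg_3)\ge0 .
\]
Each of these I verify directly from the three vertex vectors above. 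For instance, for $\bldx\cdot(\bldg_4+\bldg_3)$ the vertex values are $0,0,2\varphi+2$, and for $\bldx\cdot(\bldg_1-\bldg_3)$ they are $0,2\varphi,0$.

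The main obstacle is the last inequality, $|\bldx\cdot\bldg_1|\ge|\bldx\cdot\bldg_5|$. The functional $\bldx\cdot\bldg_5$ changes sign on $T$ (its vertex values are $-\varphi,0,\varphi$), so its absolute value is not linear there. I handle this by writing the inequality as the two-sided bound $\bldx\cdot\bldg_1\le \bldx\cdot\bldg_5\le -\bldx\cdot\bldg_1$, which is valid because $\bldx\cdot\bldg_1\le0$ on $T$. This is equivalent to the two linear conditions
\[
\bldx\cdot(\bldg_5-\bldg_1)\ge0,\qquad -\bldx\cdot(\bldg_1+\bldg_5)\ge0 .
\]
At the three vertices the first takes the values $0,0,2\varphi$ and the second takes $2\varphi,0,0$, so both hold on all of $T$.

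This establishes the full chain for every $\bldx\in T$, with equalities occurring along edges of $T$ (for example, $|\bldx\cdot\bldg_0|=|\bldx\cdot\bldg_2|$ on the edge $\bldv_1\bldv_2$). These ties do not affect the order statistics.
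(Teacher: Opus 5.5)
Your proof is correct. Its skeleton matches the paper's: determine the signs of the six inner products on $T$, then turn each comparison of absolute values into one or two linear inequalities.

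The difference is in how those inequalities are checked. The paper substitutes the barycentric parametrization $\bldx(u,v)$ and writes every $\bldx\cdot\bldg_i$ as an explicit affine function of $(u,v)$. It then reads off the signs of differences such as $2u$, $v$, $\frac{\sqrt5+3}{3}(1-u-v)$ and $-\varphi v$. You instead compute the Gram matrix once and use the fact that a linear functional is nonnegative on a convex hull iff it is nonnegative at the generating points. Everything then reduces to three vertex vectors. Your vertex values agree with the paper's affine forms; for example, $\bldx\cdot(\bldg_4+\bldg_3)$ takes the values $0,0,2\varphi+2$, which matches $\frac{\sqrt5+3}{3}(1-u-v)$ after rescaling $3\bldv_2$ back to $\bldv_2$.

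For the last link you write $|\bldx\cdot\bldg_5|\le-\bldx\cdot\bldg_1$ as the two-sided bound $\bldx\cdot\bldg_1\le\bldx\cdot\bldg_5\le-\bldx\cdot\bldg_1$. This avoids the paper's case split on the sign of $4u+v-1$.

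Your route is shorter and less error-prone, and the equality loci appear immediately as the edges where two vertex values vanish. The paper's route costs more computation, but it produces the explicit affine formulas that are reused in the next theorem to optimize $f_1,f_2,f_3$.
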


\begin{proof}
A direct computation with the parametrization \(\bldx(u,v)\) yields
\begin{align*}
\bldx\cdot \bldg_0 &= \frac{4u+v}{3} + \frac{\sqrt{5}}{2} + \frac{7}{6},\\
\bldx\cdot \bldg_2 &= \frac{-2u+v}{3} + \frac{\sqrt{5}}{2} + \frac{7}{6},\\
\bldx\cdot \bldg_4 &= \frac{-2u-2v}{3} + \frac{\sqrt{5}}{2} + \frac{7}{6},
\end{align*}
and
\begin{align*}
\bldx\cdot \bldg_1 &= -\frac{\varphi}{3}\,(2u - v + 1),\\
\bldx\cdot \bldg_3 &= -\frac{\varphi}{3}\,(2u + 2v + 1),\\
\bldx\cdot \bldg_5 &= -\frac{\varphi}{3}\,(4u + v - 1).
\end{align*}
From these expressions we first obtain, for all \((u,v)\in T\),
\[
\bldx\cdot \bldg_0>0,\quad \bldx\cdot \bldg_2>0,\quad \bldx\cdot \bldg_4>0,
\quad
\bldx\cdot \bldg_3<0,
\quad
\bldx\cdot \bldg_1\le 0,
\]
and \(\bldx\cdot \bldg_5\) changes sign across the line \(4u+v=1\).

For \(\bldg_0,\bldg_2,\bldg_4\), we have
\[
\bldx\cdot \bldg_0 - \bldx\cdot \bldg_2 = 2u\ge 0,\qquad
\bldx\cdot \bldg_2 - \bldx\cdot \bldg_4 = v\ge 0,
\]
with equality only on the edges \(u=0\) and \(v=0\), respectively. Since these three quantities are always positive, this implies
\[
|\bldx\cdot \bldg_0|\ge|\bldx\cdot \bldg_2|\ge|\bldx\cdot \bldg_4|
\]
for all \(\bldx\in T\).

To compare \(|\bldx\cdot \bldg_4|\) with \(|\bldx\cdot \bldg_3|\), note that \(\bldx\cdot \bldg_4>0\) and \(\bldx\cdot \bldg_3<0\), so
\[
|\bldx\cdot \bldg_4|\ge|\bldx\cdot \bldg_3|
\quad\Longleftrightarrow\quad
\bldx\cdot \bldg_4 + \bldx\cdot \bldg_3 \ge 0.
\]
Using the formulas above, one finds
\[
\bldx\cdot \bldg_4 + \bldx\cdot \bldg_3 
= \frac{\sqrt{5}+3}{3}\,(1-u-v),
\]
which is nonnegative on \(T\) and becomes zero exactly when \(u+v=1\). Similarly, since \(\bldx\cdot \bldg_1\le 0\) and \(\bldx\cdot \bldg_3<0\), we compare
\[
|\bldx\cdot \bldg_3|\ge|\bldx\cdot \bldg_1|
\quad\Longleftrightarrow\quad
\bldx\cdot \bldg_3 - \bldx\cdot \bldg_1 \le 0,
\]
and a direct computation shows
\[
\bldx\cdot \bldg_3 - \bldx\cdot \bldg_1 = -\varphi\,v,
\]
which is nonpositive on \(T\), with equality only when \(v=0\). For \(|\bldx\cdot \bldg_1|\) and \(|\bldx\cdot \bldg_5|\), 
\[
|\bldx\cdot \bldg_1|\ge |\bldx\cdot \bldg_5|
\ \Longleftrightarrow\
2u-v+1 \ \ge\ |\,4u+v-1\,|.
\]
Split into two cases:

\emph{(i) If $4u+v\ge1$,} then $|4u+v-1|=4u+v-1$ and
\[
2u-v+1 \ge 4u+v-1
\ \Longleftrightarrow\
2 \ge 2u+2v
\ \Longleftrightarrow\
1\ge u+v,
\]
which holds on $T$, with equality exactly on the edge $u+v=1$.

\emph{(ii) If $4u+v\le1$,} then $|4u+v-1|=1-4u-v$ and
\[
2u-v+1 \ge 1-4u-v
\ \Longleftrightarrow\
6u \ge 0,
\]
always true, with equality exactly on the edge $u=0$.

Combining (i)–(ii), we obtain \(|\bldx\cdot \bldg_1|\ge |\bldx\cdot \bldg_5|\) for all \((u,v)\in T\), and equality holds only on \(u=0\) or \(u+v=1\).

Combining these comparisons yields the global chain
\[
|\bldx\cdot \bldg_0|
\;\ge\;
|\bldx\cdot \bldg_2|
\;\ge\;
|\bldx\cdot \bldg_4|
\;\ge\;
|\bldx\cdot \bldg_3|
\;\ge\;
|\bldx\cdot \bldg_1|
\;\ge\;
|\bldx\cdot \bldg_5|
\]
for all \(\bldx\in T\), with equalities only in the boundary cases listed above.
\end{proof}

Since the code is MDS \cite{Roth2026}, we have $h_m(\mathcal{C})=\infty$ for $m=4,5$.
For $m=1,2,3$, the $m$-height optimization reduces to maximizing
\[
f_m(x):=\frac{\bldx\cdot \bldg_0}{d_m(\bldx)},\qquad \bldx\in T,
\]
where the denominator is given by
\[
d_1(\bldx)=\bldx\cdot \bldg_2,\qquad
d_2(\bldx)=\bldx\cdot \bldg_4,\qquad
d_3(\bldx)=-\,\bldx\cdot \bldg_3.
\]

\begin{thm}
\label{thm:icosa_mheight}
Let $\bldx\in T$ and let $f_1,f_2,f_3$ be defined as above. Then the maximizers over $T$ and the corresponding maximal values are:
\begin{enumerate}
  \item $\displaystyle \arg\max_{\bldx\in T} f_1(\bldx)=\bldv_0$, and $f_1(\bldg_0)=\sqrt{5}$.
  \item $\displaystyle \arg\max_{\bldx\in T} f_2(\bldx)=\bldv_0$, and $f_2(\bldg_0)=\sqrt{5}$.
  \item $\displaystyle \arg\max_{\bldx\in T} f_3(\bldx)=\bldv_2$, and $f_3(\bldv_2)=2+\sqrt{5}$.
\end{enumerate}
\end{thm}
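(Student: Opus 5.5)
Since the $m$-height does not depend on the magnitude of $\bldx$, I will work directly with the parametrization $\bldx(u,v)$, $(u,v)\in D$. By Lemma~\ref{lemma:order_icos}, each $f_m$ is then a ratio of two affine functions of $(u,v)$ on $D$. The proof of Lemma~\ref{lemma:order_icos} supplies these explicitly:
\[
N(u,v):=\bldx\cdot\bldg_0=\tfrac{4u+v}{3}+K,
\]
and
\[
d_1=\tfrac{-2u+v}{3}+K,\qquad d_2=\tfrac{-2u-2v}{3}+K,\qquad d_3=\tfrac{\varphi}{3}(2u+2v+1),
\]
where $K:=\tfrac{\sqrt5}{2}+\tfrac76$. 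Each denominator is strictly positive on $D$, as already shown in that lemma.

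\textbf{Step 1 (reduction to vertices).} A linear-fractional function $N/d$ with $d>0$ on a convex domain is quasi-linear. Indeed, each superlevel set $\{N/d\ge t\}=\{N-td\ge 0\}$ is a half-plane intersected with $D$. Hence the maximum of $f_m$ over the triangle $D$ is attained at one of its vertices, namely $\bldv_0$ $(u,v)=(1,0)$, $\bldv_1$ $(0,1)$, or $\bldv_2$ $(0,0)$. Equivalently, I can restrict $f_m$ to any segment in $D$. Along the segment $f_m$ is of the form $(a+bt)/(c+et)$, which is monotone because its derivative has the constant sign of $bc-ae$. So it suffices to compare the three vertex values. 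If the maximal vertex value is strictly larger than the other two, the maximizer is unique.

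\textbf{Step 2 (vertex evaluation).} These are routine computations using $\varphi^2=\varphi+1$.

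At $\bldv_0$: $N=\tfrac52+\tfrac{\sqrt5}{2}=2+\varphi$, $d_1=d_2=\varphi$, and $d_3=\varphi$. This gives $f_1=f_2=f_3=\tfrac{5+\sqrt5}{1+\sqrt5}=\sqrt5$.

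At $\bldv_1$: $N=d_1=\tfrac32+\tfrac{\sqrt5}{2}$, $d_2=\varphi$, and $d_3=\varphi$. This gives $f_1=1$ and $f_2=f_3=\tfrac{3+\sqrt5}{1+\sqrt5}=\varphi$.

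At $\bldv_2$: $N=d_1=d_2=K$ and $d_3=\varphi/3$. This gives $f_1=f_2=1$ and $f_3=\tfrac{3K}{\varphi}=\tfrac{7+3\sqrt5}{1+\sqrt5}=2+\sqrt5$.

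\textbf{Step 3 (conclusion).} Comparing the vertex values, using $1<\varphi<\sqrt5<2+\sqrt5$, yields all three claims.
\begin{itemize}
\item For $m=1$, the maximizer is $\bldv_0=\bldg_0$ with value $\sqrt5$.
\item For $m=2$, the maximizer is again $\bldv_0$ with value $\sqrt5$.
\item For $m=3$, the maximizer is $\bldv_2$ with value $2+\sqrt5$.
\end{itemize}
The identification of $\bldv_0$ with $\bldg_0$ also justifies writing $f_1(\bldg_0)$ and $f_2(\bldg_0)$ in the statement.

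\textbf{Main obstacle.} The only genuinely conceptual step is Step~1. There I must make sure each $f_m$ is a single linear-fractional expression on all of $T$, which holds because Lemma~\ref{lemma:order_icos} fixes the ordering and the signs of $\bldx\cdot\bldg_i$ throughout $T$. The remainder is careful arithmetic with $\varphi$ and $\sqrt5$.
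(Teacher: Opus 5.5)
Your proof is correct, but it takes a different route from the paper.

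\textbf{The paper's route.} The paper argues by calculus on $D$. It computes $\partial_u f_m$ and $\partial_v f_m$, reads off their signs, and uses them to push the maximizer to the boundary:
\begin{itemize}
\item For $f_1$, both partials are positive, which leads to the edge $u+v=1$ and then a one-variable monotonicity check.
\item For $f_2$, it shows $\partial_u>0$ and $\partial_v\le 0$.
\item For $f_3$, both partials are negative.
\end{itemize}

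\textbf{Your route.} You replace all of this with one structural fact: a ratio of affine functions with positive denominator is quasi-linear on $D$. So its maximum over the triangle is attained at a vertex. Your superlevel-set argument is valid, because the superlevel set at the maximal level is a nonempty intersection of $D$ with a closed half-plane, and such a set contains a vertex. What remains is nine vertex evaluations. I checked them, and they are all correct.

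\textbf{What each approach buys.} Yours is shorter. It gives uniqueness of the maximizer directly from the strict inequalities $1<\varphi<\sqrt5<2+\sqrt5$. It is also immune to bookkeeping slips. In fact, the paper's displayed formulas for $f_1$ and $f_2$ are interchanged relative to $d_1=\bldx\cdot\bldg_2$ and $d_2=\bldx\cdot\bldg_4$: its ``$f_1$'' has denominator proportional to $\bldx\cdot\bldg_4$. This is harmless only because both ratios peak at $\bldv_0$ with value $\sqrt5$, which your vertex table shows at a glance.

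The paper's derivative computations do supply explicit directional monotonicity, and the paper reuses this in the dodecahedral section. Even there, your vertex principle would suffice. Applied to each convex subregion of $T'$ on which the order of the projections is fixed, it directly produces the candidate set $\mathcal{S}$, whose points are exactly the vertices of those subregions.
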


\begin{proof}
We prove each claim by analyzing the monotonicity of the corresponding objective on the domain $D$.

\textbf{(1) For $f_1$.}
For $(u,v)\in D$,
\[
f_1(u,v)=\frac{4u+v+\frac{3\sqrt{5}}{2}+\frac{7}{2}}{-2u-2v+\frac{3\sqrt{5}}{2}+\frac{7}{2}}.
\]
A direct differentiation simplifies to
\[
\begin{aligned}
\partial_u f_1(u,v)
&=\frac{-6v+9\sqrt{5}+21}
        {\bigl(-2u-2v+\frac{3\sqrt{5}}{2}+\frac{7}{2}\bigr)^2}
>0,\\
\partial_v f_1(u,v)
&=\frac{6u+\frac{9\sqrt{5}}{2}+\frac{21}{2}}
        {\bigl(-2u-2v+\frac{3\sqrt{5}}{2}+\frac{7}{2}\bigr)^2}
>0.
\end{aligned}
\]
Hence $f_1$ is increasing in both $u$ and $v$ on $D$, so the maximum must lie on the edge $u+v=1$.
Restricting to this edge,
\[
f_1(u,1-u)=\frac{3u+1+\frac{3\sqrt{5}}{2}+\frac{7}{2}}{-2+\frac{3\sqrt{5}}{2}+\frac{7}{2}},
\]
which is strictly increasing in $u$. Therefore the unique maximizer is $u=1$, i.e.,
$(u,v)=(1,0)$, corresponding to $\bldx=\bldv_0=\bldg_0$. Evaluating yields $f_1(\bldg_0)=\sqrt{5}$.

\textbf{(2) For $f_2$.}
For $(u,v)\in D$,
\[
f_2(u,v)=\frac{8u+2v+3\sqrt{5}+7}{-4u+2v+3\sqrt{5}+7}.
\]
Differentiation gives $\partial_u f_2(u,v)>0$ and $\partial_v f_2(u,v)\le 0$ on $D$.
Thus, for any $(u,v)\in D$ we have $f_2(u,0)\ge f_2(u,v)$, and then $f_2(1,0)\ge f_2(u,0)$.
Hence the maximum is attained at $(u,v)=(1,0)$, i.e., $\bldx=\bldv_0=\bldg_0$, and $f_2(\bldg_0)=\sqrt{5}$.

\textbf{(3) For $f_3$.}
For $(u,v)\in D$,
\[
f_3(u,v)=\frac{8u+2v+3\sqrt{5}+7}{(1+\sqrt{5})(2u+2v+1)}.
\]
Since $(1+\sqrt{5})^{-1}>0$, it suffices to maximize
\[
g(u,v)=\frac{8u+2v+3\sqrt{5}+7}{2u+2v+1}.
\]
One checks that $\partial_u g(u,v)<0$ and $\partial_v g(u,v)<0$ on $D$. Hence $g$ (and thus $f_3$)
is strictly decreasing as either $u$ or $v$ increases, so the maximum over $D$ is attained at
$(u,v)=(0,0)$, corresponding to $\bldx=\bldv_2$. Evaluating yields $f_3(\bldv_2)=2+\sqrt{5}$.

This completes the proof.
\end{proof}

\subsection{Dual dodecahedral codes}

We now consider the geometric code derived from the regular dodecahedron. 
The dodecahedron has $20$ vertices, $12$ faces, and $30$ edges. Placing its vertices on the sphere yields $10$ symmetric axes $\{\bldg_0,\ldots,\bldg_{9}\}$ (each axis connects a pair of antipodal vertices), represented by
\[
\resizebox{\columnwidth}{!}{$
G=
\begin{bmatrix}
1 & 1 & 1 & 1 & 0 & 0 & \varphi^{-1} & \varphi^{-1} & \varphi & \varphi \\
1 & 1 & -1 & -1 & \varphi & \varphi & 0 & 0 & \varphi^{-1} & -\varphi^{-1} \\
1 & -1 & 1 & -1 & \varphi^{-1} & -\varphi^{-1} & \varphi & -\varphi & 0 & 0
\end{bmatrix}.
$}
\]

Similar to the icosahedral case, symmetry allows us to restrict attention to a fundamental triangular region on a single dodecahedral face, as illustrated in Fig.~\ref{fig:dode}, with vertices
\[
\bldx_A := \bldg_0,\ \ 
\bldx_B := \frac{\bldg_0+\bldg_4}{2},\ \ 
\bldx_C := \frac{\bldg_0+\bldg_1+\bldg_4+\bldg_5+\bldg_8}{5}.
\]

Let $T'=\mathrm{conv}\{\bldx_A,\bldx_B,\bldx_C\}, \bldx(u,v)=u\,\bldx_A+v\,\bldx_B+(1-u-v)\,\bldx_C,$
where $u,v\ge0$ and $u+v\le1$, and define $\beta_j(x):=|\bldx\cdot \bldg_j|$. Let $\beta_{[j]}(x)$ be the order statistics of $\beta_j(x)$.

\begin{figure}

\centering

\tdplotsetmaincoords{70}{110}

\begin{tikzpicture}[tdplot_main_coords, line cap=round, line join=round]
  \pgfmathsetmacro{\escala}{1.5}

  \coordinate (1)  at (-\escala*1.37638, \escala*0.,       \escala*0.262866);
  \coordinate (2)  at ( \escala*1.37638, \escala*0.,      -\escala*0.262866);
  \coordinate (3)  at (-\escala*0.425325,-\escala*1.30902, \escala*0.262866);
  \coordinate (4)  at (-\escala*0.425325, \escala*1.30902, \escala*0.262866);
  \coordinate (5)  at ( \escala*1.11352, -\escala*0.809017,\escala*0.262866);
  \coordinate (6)  at ( \escala*1.11352,  \escala*0.809017,\escala*0.262866);
  \coordinate (7)  at (-\escala*0.262866,-\escala*0.809017,\escala*1.11352);
  \coordinate (8)  at (-\escala*0.262866, \escala*0.809017,\escala*1.11352);
  \coordinate (9)  at (-\escala*0.688191,-\escala*0.5,    -\escala*1.11352);
  \coordinate (10) at (-\escala*0.688191, \escala*0.5,    -\escala*1.11352);

  \coordinate (11) at ( \escala*0.688191,-\escala*0.5,     \escala*1.11352);
  \coordinate (12) at ( \escala*0.688191, \escala*0.5,     \escala*1.11352);
  \coordinate (13) at ( \escala*0.850651, \escala*0.,     -\escala*1.11352);
  \coordinate (14) at (-\escala*1.11352, -\escala*0.809017,-\escala*0.262866);
  \coordinate (15) at (-\escala*1.11352,  \escala*0.809017,-\escala*0.262866);
  \coordinate (16) at (-\escala*0.850651, \escala*0.,      \escala*1.11352);
  \coordinate (17) at ( \escala*0.262866,-\escala*0.809017,-\escala*1.11352);
  \coordinate (18) at ( \escala*0.262866, \escala*0.809017,-\escala*1.11352);
  \coordinate (19) at ( \escala*0.425325,-\escala*1.30902, -\escala*0.262866);
  \coordinate (20) at ( \escala*0.425325, \escala*1.30902, -\escala*0.262866);

  \draw[gray!40, ultra thin] (2)  -- (6)  -- (12) -- (11) -- (5)  -- cycle;
  \draw[gray!40, ultra thin] (5)  -- (11) -- (7)  -- (3)  -- (19) -- cycle;
  \draw[gray!40, ultra thin] (11) -- (12) -- (8)  -- (16) -- (7)  -- cycle;
  \draw[gray!40, ultra thin] (12) -- (6)  -- (20) -- (4)  -- (8)  -- cycle;
  \draw[gray!40, ultra thin] (6)  -- (2)  -- (13) -- (18) -- (20) -- cycle;
  \draw[gray!40, ultra thin] (2)  -- (5)  -- (19) -- (17) -- (13) -- cycle;

  \draw[very thick, fill=white, fill opacity=0.15] (4)  -- (20) -- (18) -- (10) -- (15) -- cycle;
  \draw[very thick, fill=white, fill opacity=0.15] (18) -- (13) -- (17) -- (9)  -- (10) -- cycle;
  \draw[very thick, fill=white, fill opacity=0.15] (17) -- (19) -- (3)  -- (14) -- (9)  -- cycle;
  \draw[very thick, fill=white, fill opacity=0.15] (3)  -- (7)  -- (16) -- (1)  -- (14) -- cycle;
  \draw[very thick, fill=white, fill opacity=0.15] (16) -- (8)  -- (4)  -- (15) -- (1)  -- cycle;
  \draw[very thick, fill=white, fill opacity=0.15] (15) -- (10) -- (9)  -- (14) -- (1)  -- cycle;

  \fill (1)  circle[radius=2pt];
  \fill (15) circle[radius=2pt];
  \fill (10) circle[radius=2pt];
  \fill (9)  circle[radius=2pt];
  \fill (14) circle[radius=2pt];

  \node[font=\footnotesize, xshift=7pt,  yshift=3pt] at (1)  {$\bldg_0$};
  \node[font=\footnotesize, xshift=1pt, yshift=7pt] at (15) {$\bldg_8$};
  \node[font=\footnotesize, xshift=-6pt,  yshift=-4pt] at (10) {$\bldg_1$};
  \node[font=\footnotesize, xshift=-8pt, yshift=2pt] at (9)  {$\bldg_5$};
  \node[font=\footnotesize, xshift=-6pt, yshift=-2pt] at (14) {$\bldg_4$};

  \coordinate (xB) at ($(1)!0.5!(14)$);
  \node[font=\footnotesize, xshift=-5pt, yshift=4pt] at (xB) {$\bldx_B$};

  \coordinate (xC) at ($0.2*(1)+0.2*(2)+0.2*(9)+0.2*(5)+0.2*(6)$);
  \node[font=\footnotesize, xshift=24pt, yshift=-1pt] at (xC) {$\bldx_C$};

  \fill (xB)  circle[xshift=0pt, yshift=0pt, radius=1.2pt];
  \fill (xC)  circle[xshift=24pt, yshift=4.5pt, radius=1.2pt];

\coordinate (xBshow) at ($ (xB) + (0pt,0pt) $);
\coordinate (xCshow) at ($ (xC) + (24pt,4.5pt) $);

\fill[gray!35, opacity=0.65] (1) -- (xBshow) -- (xCshow) -- cycle;

\draw[black, line width=0.9pt] (1) -- (xBshow) -- (xCshow) -- cycle;

\end{tikzpicture}
\caption{Dodecahedron with a shaded triangular region indicating the fundamental search space.}
\label{fig:dode}
\end{figure}

\begin{lemma}
\label{lem:dode_ineq}
For any $\bldx \in T'$ we have the following inequalities:
\begin{align}
\beta_0 \ge \beta_4 \ge \beta_8 \ge \max\{\beta_5,\beta_6\}, \label{eq:A1}\\
\beta_5 \ge \beta_1,\qquad \beta_5 \ge \beta_3,\qquad \beta_6 \ge \beta_3, \label{eq:A2}\\
\beta_3 \ge \max\{\beta_2,\beta_7,\beta_9\},\qquad
\beta_1 \ge \max\{\beta_2,\beta_7,\beta_9\}, \label{eq:A3}\\
\beta_2 \le \beta_9. \label{eq:A4}
\end{align}
\end{lemma}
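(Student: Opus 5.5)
The plan is to reduce every inequality in the lemma to nonnegativity of affine functions of $(u,v)$ on the standard triangle $D$. Each such check then only needs the three vertices $\bldx_A,\bldx_B,\bldx_C$.

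\textbf{Step 1: dot products.} Since $\bldx(u,v)=u\,\bldx_A+v\,\bldx_B+(1-u-v)\,\bldx_C$ is affine in $(u,v)$, each $\ell_j(u,v):=\bldx(u,v)\cdot\bldg_j$ is an affine function on $D$. I will first write down the three vertices explicitly:
\[
\bldx_A=(1,1,1),\qquad
\bldx_B=\tfrac12\bigl(1,\,1+\varphi,\,1+\varphi^{-1}\bigr),\qquad
\bldx_C=\tfrac15\bigl(2+\varphi,\,2+2\varphi+\varphi^{-1},\,0\bigr).
\]
Then I will tabulate the $30$ numbers $\bldx_P\cdot\bldg_j$ for $P\in\{A,B,C\}$ and $j\in[10\rangle$. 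Throughout, I will simplify using $\varphi^2=\varphi+1$ and $\varphi^{-1}=\varphi-1$, so that every entry has the form $a+b\sqrt5$ with rational $a,b$. This table is the only genuine computation. Every later claim follows from it, because an affine function is nonnegative on a triangle if and only if it is nonnegative at the three vertices.

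\textbf{Step 2: converting absolute-value comparisons to affine checks.} An inequality $\beta_i\ge\beta_j$ means $|\ell_i|\ge|\ell_j|$. This holds if and only if both $\ell_i-\ell_j$ and $\ell_i+\ell_j$ have the same sign as $\ell_i$, with the obvious convention when $\ell_i=0$.

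For the dominant side of each comparison ($\ell_0,\ell_4,\ell_8,\ell_5,\ell_6,\ell_3,\ell_1,\ell_9$), I will verify from the vertex table that the sign is constant on $T'$. At worst it vanishes on an edge, which is harmless. Once $\varepsilon_i\ell_i\ge0$ on $T'$ for a fixed sign $\varepsilon_i\in\{\pm1\}$, the comparison $\beta_i\ge\beta_j$ is equivalent to the pair
\[
\varepsilon_i\ell_i-\ell_j\ge0,\qquad \varepsilon_i\ell_i+\ell_j\ge0 \quad\text{on } D.
\]
Both functions are affine, so each needs only three vertex evaluations. This approach never requires knowing the sign of the smaller quantity $\ell_j$. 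That matters here because, as with $\bldx\cdot\bldg_5$ in the icosahedral case of Lemma~\ref{lemma:order_icos}, some $\ell_j$ (likely $\ell_2,\ell_7,\ell_9$, and possibly $\ell_5$ or $\ell_6$) may change sign inside $T'$. The trick therefore avoids the case split used in the proof of Lemma~\ref{lemma:order_icos}.

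\textbf{Step 3: the chains.} I will apply Step 2 to each listed pair.
\begin{itemize}
\item For \eqref{eq:A1}: the pairs $(0,4)$, $(4,8)$, $(8,5)$ and $(8,6)$.
\item For \eqref{eq:A2}: the pairs $(5,1)$, $(5,3)$ and $(6,3)$.
\item For \eqref{eq:A3}: the pairs $(3,2)$, $(3,7)$, $(3,9)$, $(1,2)$, $(1,7)$ and $(1,9)$.
\item For \eqref{eq:A4}: the pair $(9,2)$.
\end{itemize}
In each case at most six vertex evaluations are needed, and all of them can be read off the table from Step 1.

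\textbf{Main obstacle.} I expect the difficulty to lie in the pairs whose dominant term is itself small or nearly sign-changing. The candidates are \eqref{eq:A4}, where both $\ell_2$ and $\ell_9$ may vanish somewhere in $T'$, and the comparisons involving $\ell_5$ and $\ell_6$ near $\bldx_C$, where several values coincide by the fivefold symmetry of the face centre. There the affine checks may give $0$ at a vertex, and I have to confirm the value is exactly zero rather than slightly negative by an exact algebraic simplification in $\mathbb{Q}(\sqrt5)$.

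If a dominant term does change sign on $T'$, my first fallback is the criterion $|a|\ge|b|\iff(a-b)(a+b)\ge0$. In that case I would split $D$ along the line where that term vanishes, exactly as in cases (i)--(ii) of the proof of Lemma~\ref{lemma:order_icos}, and redo the vertex checks on the two resulting sub-polygons.
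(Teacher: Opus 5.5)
Your proposal is correct and follows essentially the same route as the paper. The paper likewise fixes the sign of each dominating inner product on $T'$ and reduces every comparison $\beta_i\ge\beta_j$ to the two affine inequalities $\varepsilon_i\ell_i\pm\ell_j\ge 0$ on $D$.

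The only difference is how you certify these inequalities: you evaluate at $\bldx_A,\bldx_B,\bldx_C$, while the paper writes each one out as an explicit linear form in $(u,v)$. Your check is equivalent and less error-prone; a few of the paper's displayed forms, e.g.\ for $\ell_8-\ell_5$ and $\beta_1\pm\ell_7$, have coefficient slips even though the claimed signs are right. In fact every dominating term has constant sign on $T'$ ($\ell_3<0$, $\ell_9\ge 0$ with equality only at $\bldx_B$, all others positive), so your fallback case split is never needed.
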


\begin{proof}
We first compute
\begin{align*}
\bldx\cdot \bldg_0
&=\Bigl(2-\frac{2\sqrt5}{5}\Bigr)u
   +\Bigl(\frac12+\frac{\sqrt5}{10}\Bigr)v
   +\Bigl(1+\frac{2\sqrt5}{5}\Bigr),\\
\bldx\cdot \bldg_4
&=\Bigl(-1+\frac{3\sqrt5}{5}\Bigr)u
   +\Bigl(\frac12+\frac{\sqrt5}{10}\Bigr)v
   +\Bigl(1+\frac{2\sqrt5}{5}\Bigr),\\
\bldx\cdot \bldg_8
&=\Bigl(-1+\frac{3\sqrt5}{5}\Bigr)u
   +\Bigl(-\frac12+\frac{\sqrt5}{10}\Bigr)v
   +\Bigl(1+\frac{2\sqrt5}{5}\Bigr),\\
\bldx\cdot \bldg_1
&=\Bigl(-\frac{2\sqrt5}{5}\Bigr)u
   +\Bigl(-\frac{2\sqrt5}{5}\Bigr)v
   +\Bigl(1+\frac{2\sqrt5}{5}\Bigr),\\
\bldx\cdot \bldg_5
&=\Bigl(-\frac{2\sqrt5}{5}\Bigr)u
   +\Bigl(-\frac{2\sqrt5}{5}+\frac{\sqrt5-1}{2}\Bigr)v
   +\Bigl(1+\frac{2\sqrt5}{5}\Bigr),\\
\bldx\cdot \bldg_6
&=\Bigl(\frac{4\sqrt5}{5}\Bigr)u
 +\Bigl(\frac{5+3\sqrt5}{10}\Bigr)v
 +\frac{\sqrt5}{5},\\
\bldx\cdot \bldg_3
&=\Bigl(-1+\frac{\sqrt5}{5}\Bigr)u
  -\Bigl(\frac12+\frac{3\sqrt5}{10}\Bigr)v
  -\frac{\sqrt5}{5},\\
\bldx\cdot \bldg_9
&=\Bigl(1-\frac{\sqrt5}{5}\Bigr)u
  -\frac{\sqrt5}{5}\,v
  +\frac{\sqrt5}{5},\\
\bldx\cdot \bldg_2
&=\Bigl(1+\frac{\sqrt5}{5}\Bigr)u
  +\frac{\sqrt5}{5}\,v
  -\frac{\sqrt5}{5},\\
\bldx\cdot \bldg_7
&=\frac{\sqrt5}{5}
  -\Bigl(1+\frac{\sqrt5}{5}\Bigr)(u+v).
\end{align*}

On $T'$, one checks that $\bldx\cdot \bldg_3 \le 0$ everywhere, while
\[
\bldx\cdot \bldg_j\ge 0\quad \text{for } j\in\{0,1,4,5,6,8,9\}.
\]
The only inner products that may change sign on $T'$ are $\bldx\cdot \bldg_2$ and $\bldx\cdot \bldg_7$.
Therefore, we may drop absolute values for $\bldx\cdot \bldg_j$ with $j\in\{0,1,4,5,6,8,9\}$, keep a minus sign for $|\bldx\cdot \bldg_3|=-(\bldx\cdot \bldg_3)$, and retain absolute values for $|\bldx\cdot \bldg_2|$ and $|\bldx\cdot \bldg_7|$.

\smallskip
\noindent\emph{Proof of \eqref{eq:A1}.}
Compute the differences:
\[
(\bldx\cdot \bldg_0)-(\bldx\cdot \bldg_4)=(3-\sqrt5)\,u\ \ge\ 0,
\ \ \
(\bldx\cdot \bldg_4)-(\bldx\cdot \bldg_8)=v\ \ge\ 0,
\]
\[
\begin{aligned}
&(\bldx\cdot \bldg_8)-(\bldx\cdot \bldg_5)=\Bigl(\frac{5-\sqrt5}{5}\Bigr)u+\frac{\sqrt5-1}{2}\,v \ \ge\ 0,\\
&(\bldx\cdot \bldg_8)-(\bldx\cdot \bldg_6)
=\Bigl(1+\frac{\sqrt5}{5}\Bigr)\bigl(1-u-v\bigr)\ \ge\ 0.
\end{aligned}
\]
which gives $\beta_0\ge\beta_4\ge\beta_8\ge\max\{\beta_5,\beta_6\}$.

\smallskip
\noindent\emph{Proof of \eqref{eq:A2}.}
First,
\[
(\bldx\cdot \bldg_5)-(\bldx\cdot \bldg_1)=\frac{\sqrt5-1}{2}\,v\ \ge\ 0
\quad\Longrightarrow\quad \beta_5\ge \beta_1.
\]
Next,
\[
\beta_5-\beta_3=(\bldx\cdot \bldg_5)+(\bldx\cdot \bldg_3)
=1+\frac{\sqrt5}{5}-\Bigl(1+\frac{\sqrt5}{5}\Bigr)(u+v)\ \ge\ 0,
\]
so $\beta_5\ge\beta_3$. Also,
\[
\beta_6-\beta_3=(\bldx\cdot \bldg_6)+(\bldx\cdot \bldg_3)
=(\sqrt5-1)\,u\ \ge\ 0.
\]
so $\beta_6\ge\beta_3$.

\smallskip
\noindent\emph{Proof of \eqref{eq:A3}.}
We show $|\bldx\cdot \bldg_j|\le \beta_3$ and $|\bldx\cdot \bldg_j|\le \beta_1$ for $j\in\{2,7,9\}$.
For $j=9$ (no absolute needed):
\[
\begin{aligned}
&\beta_3-\beta_{9}=-(\bldx\cdot \bldg_3)-(\bldx\cdot \bldg_{9})
=\frac{1+\sqrt5}{2}\,v\ \ge\ 0,\\
&\beta_1-\beta_{9}
=1+\frac{\sqrt5}{5}-\Bigl(1+\frac{\sqrt5}{5}\Bigr)u-\frac{\sqrt5}{5}\,v\ \ge\ 0.    
\end{aligned}
\]
For $j=2$, it suffices to check $\beta_3\pm(\bldx\cdot \bldg_2)\ge 0$ and $\beta_1\pm(\bldx\cdot \bldg_2)\ge 0$:
\[
\begin{aligned}
&\beta_3-(\bldx\cdot \bldg_2)=\frac{2\sqrt5}{5}(1-u)+\frac{5+\sqrt5}{10}v\ \ge\ 0, \\
&\beta_3+(\bldx\cdot \bldg_2)=2u+\Bigl(\frac12+\frac{\sqrt5}{2}\Bigr)v\ \ge\ 0,\\
&\beta_1-(\bldx\cdot \bldg_2)=1+\frac{3\sqrt5}{5}-\Bigl(1+\frac{3\sqrt5}{5}\Bigr)u-\frac{3\sqrt5}{5}v\ \ge\ 0, \\
&\beta_1+(\bldx\cdot \bldg_2)=1+\frac{\sqrt5}{5}+\Bigl(1-\frac{\sqrt5}{5}\Bigr)u-\frac{\sqrt5}{5}v\ \ge\ 0.
\end{aligned}
\]

For $j=7$, similarly check $\beta_3\pm(\bldx\cdot \bldg_7)\ge 0$ and $\beta_1\pm(\bldx\cdot \bldg_7)\ge 0$:
\[
\begin{aligned}
&\beta_3-(\bldx\cdot \bldg_7)
=2u+\Bigl(\frac12+\frac{3\sqrt5}{10}\Bigr)v+\frac{\sqrt5}{5}\ \ge\ 0,\\
&\beta_3+(\bldx\cdot \bldg_7)
=\frac{2\sqrt5}{5}-\frac{2\sqrt5}{5}u-\frac{5-\sqrt5}{10}\,v \ \ge\ 0,\\
&\beta_1-(\bldx\cdot \bldg_7)
=1+\frac{3\sqrt5}{5}
-\Bigl(1-\frac{\sqrt5}{5}\Bigr)u
-\Bigl(1-\frac{3\sqrt5}{5}\Bigr)v\ \ge\ 0,\\
&\beta_1+(\bldx\cdot \bldg_7)
=1+\frac{\sqrt5}{5}
+\Bigl(1+\frac{\sqrt5}{5}\Bigr)u
+\frac{3\sqrt5}{5}v\ \ge\ 0.
\end{aligned}
\]
Thus $\beta_3\ge \max\{\beta_2,\beta_7,\beta_{9}\}$ and $\beta_1\ge \max\{\beta_2,\beta_7,\beta_{9}\}$.

\smallskip
\noindent\emph{Proof of \eqref{eq:A4}.}
Since $\beta_{9}=\bldx\cdot \bldg_{9}\ge 0$, it suffices to show $\beta_{9}\pm(\bldx\cdot \bldg_2)\ge 0$:
\[
\begin{aligned}
\beta_{9}-(\bldx\cdot \bldg_2)
&=\frac{2\sqrt5}{5}(1-u-v)\ \ge\ 0,\\
\beta_{9}+(\bldx\cdot \bldg_2)
&=2u\ \ge\ 0.
\end{aligned}
\]
hence $|\bldx\cdot \bldg_2|\le \bldx\cdot \bldg_{9}$, i.e., $\beta_2\le \beta_{9}$.
\end{proof}

\begin{lemma}
\label{lem:rank_supports_on_T}
Let $\bldx\in T'$, and let $\beta_{(0)}(\bldx)\ge \cdots \ge \beta_{(9)}(\bldx)$ be the elements of $\{\beta_j(\bldx)\}_{j=0}^{9}$ sorted in non-increasing order.
Under the inequalities \eqref{eq:A1}--\eqref{eq:A4}, the index $j$ satisfying $\beta_j(\bldx)=\beta_{(\ell)}(\bldx)$ can only belong to:
\[
\ell=0:\{0\},\quad
\ell=1:\{4\},\quad
\ell=2:\{8\},\quad
\ell=3:\{5,6\},
\]
\[
\ell=4:\{1,5,6\},\quad
\ell=5:\{1,3,6\},\quad
\ell=6:\{1,3\},
\]
\[
\ell=7:\{7,9\},\quad
\ell=8:\{2,7,9\},\quad
\ell=9:\{2,7\}.
\]
\end{lemma}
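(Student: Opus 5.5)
The plan is a purely combinatorial counting argument on the partial order given by Lemma~\ref{lem:dode_ineq}. For each index $j$, I will compute two numbers from the inequalities \eqref{eq:A1}--\eqref{eq:A4}, taken together with their transitive consequences. The first is $a_j$, the number of indices forced to be $\ge \beta_j$ (its ancestors). The second is $b_j$, the number of indices forced to be $\le \beta_j$ (its descendants). Then $\beta_j$ can occupy rank $\ell$ only if $a_j \le \ell \le 9-b_j$. Ties are resolved by any fixed consistent ordering, so we just need the ranks compatible with the poset.

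First I write the Hasse relations explicitly: $0>4>8$; $8>5$ and $8>6$; $5>1$, $5>3$, $6>3$; $3>\{2,7,9\}$ and $1>\{2,7,9\}$; $9>2$. Then I take the transitive closure and count for each element.
\begin{itemize}
\item Index $0$ has no ancestors, so rank $0$.
\item Index $4$ has ancestor $\{0\}$ and all $8$ remaining indices below it, so rank $1$.
\item Index $8$ has ancestors $\{0,4\}$ and $7$ descendants, so rank $2$.
\item Index $5$ has ancestors $\{0,4,8\}$ and descendants $\{1,3,2,7,9\}$, giving ranks $3$--$4$.
\item Index $6$ has ancestors $\{0,4,8\}$ and descendants $\{3,2,7,9\}$, giving ranks $3$--$5$.
\item Index $1$ has ancestors $\{0,4,8,5\}$ and descendants $\{2,7,9\}$, giving ranks $4$--$6$.
\item Index $3$ has ancestors $\{0,4,8,5,6\}$ and descendants $\{2,7,9\}$, giving ranks $5$--$6$.
\item Index $9$ has ancestors $\{0,4,8,5,6,1,3\}$ and descendant $\{2\}$, giving ranks $7$--$8$.
\item Index $7$ has the same $7$ ancestors and no descendants, giving ranks $7$--$9$.
\item Index $2$ has ancestors including $9$, that is $8$ of them, giving ranks $8$--$9$.
\end{itemize}

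Inverting this table gives exactly the listed sets. For example, rank $3$ can be taken only by $\{5,6\}$, rank $4$ by $\{1,5,6\}$, rank $5$ by $\{1,3,6\}$, and rank $9$ by $\{2,7\}$.

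The step needing most care is justifying the necessity claim $a_j\le\ell\le 9-b_j$ when there are ties. I handle it by noting that if $\beta_j=\beta_{(\ell)}$ is realized with ties broken so that forced inequalities are respected, then every ancestor sits at a rank $\le \ell$ and every descendant at a rank $\ge \ell$. This is consistent because whenever the inequalities of Lemma~\ref{lem:dode_ineq} hold with equality, swapping the tied elements does not change the multiset of values. I also double-check the transitive closure, especially that $2$ is dominated via $9$ and that $5$ and $6$ are incomparable. Those two facts are what produce the overlapping sets at ranks $3$--$6$.
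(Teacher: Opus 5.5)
Your proof is correct and takes essentially the paper's route. The paper fixes ranks $0,1,2$ via \eqref{eq:A1}, separates the blocks $\{1,3,5,6\}$ and $\{2,7,9\}$ via \eqref{eq:A3}, and then uses \eqref{eq:A2} and \eqref{eq:A4} to exclude indices from the extreme positions within each block, which is your ancestor/descendant bound $a_j\le\ell\le 9-b_j$ applied block by block. On ties, which the paper leaves implicit, your swapping remark is a bit loose and is cleanest stated as: sort by value, break ties by a fixed linear extension of the poset, and the sorted order is then itself a linear extension, so the lemma reads as ``some valid sorting places an index from the listed set at rank $\ell$.''
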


\begin{proof}
From \eqref{eq:A1}--\eqref{eq:A4}, we know that ranks
$\ell=0,1,2$ are attained uniquely by $\{0\},\{4\},\{8\}$,
and the remaining ranks are determined by
$\{\beta_1,\beta_2,\beta_3,\beta_5,\beta_6,\beta_7,\beta_{9}\}$.

From \eqref{eq:A3}, we have
\[
\max\{\beta_2,\beta_7,\beta_{9}\}\le \min\{\beta_1,\beta_3\},
\]
so $\{2,7,9\}$ must occupy the bottom three ranks
$\ell=7,8,9$, and $\{1,3,5,6\}$ must occupy ranks
$\ell=3,4,5,6$.

Within $\{1,3,5,6\}$, inequality \eqref{eq:A2} gives
\[
\beta_5\ge \beta_1,\qquad \beta_5\ge \beta_3,\qquad \beta_6\ge \beta_3.
\]
Hence $\beta_1$ cannot be the largest among
$\{\beta_1,\beta_3,\beta_5,\beta_6\}$ since $\beta_5\ge \beta_1$,
and $\beta_3$ cannot be the largest since $\beta_5\ge \beta_3$;
therefore the largest must be attained at index $5$ or $6$, i.e.,
$\ell=3:\{5,6\}$. Moreover, since both $\beta_5$ and $\beta_6$
dominate $\beta_3$, index $3$ cannot be the second-largest, so the
second-largest must lie in $\{1,5,6\}$, i.e., $\ell=4:\{1,5,6\}$.
Next, because $\beta_5\ge \beta_1$ and $\beta_5\ge \beta_3$,
index $5$ cannot be the third- or fourth-largest within this subset;
thus the third-largest must lie in $\{1,3,6\}$, i.e.,
$\ell=5:\{1,3,6\}$. Finally, since $\beta_6\ge \beta_3$, index $6$
cannot be the smallest among the four, so the smallest must lie in
$\{1,3\}$, i.e., $\ell=6:\{1,3\}$.

Within the bottom group $\{2,7,9\}$, inequality \eqref{eq:A4} gives
$\beta_2 \le \beta_{9}$, so index $2$ cannot be the largest among
$\{2,7,9\}$ and index $9$ cannot be the smallest among $\{2,7,9\}$.
Hence
\[
\ell=7:\{7,9\},\qquad
\ell=8:\{2,7,9\},\qquad
\ell=9:\{2,7\}.
\]
\end{proof}

\begin{lemma}
\label{lem:no_stationary_except_3_8}
For each $j\in\{1,3,4,5,6,8,9\}$, define
\[
f_j(u,v):=\frac{\bldx(u,v)\cdot \bldg_0}{|\bldx(u,v)\cdot \bldg_j|},
\qquad \bldx(u,v)\in T'.
\]
Then $f_j$ has no stationary point in $T'$. In fact, on $T'$ we have
\[
\begin{alignedat}{4}
\partial_u f_1&>0,\quad & \partial_v f_1&>0,
&\qquad \partial_u f_4&>0,\quad & \partial_v f_4&\le 0,\\
\partial_u f_5&>0,\quad & \partial_v f_5&>0,
&\qquad \partial_u f_6&<0,\quad & \partial_v f_6&<0,\\
\partial_u f_8&>0,\quad & \partial_v f_8&>0,
&\qquad \partial_u f_9&<0,\quad & \partial_v f_9&>0,
\end{alignedat}
\]
and for $j=3$, $\partial_v f_3<0$.
\end{lemma}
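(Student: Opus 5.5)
The plan is to reduce every claim to the sign of a single $2\times 2$ determinant built from affine functions. By the proof of Lemma~\ref{lem:dode_ineq}, each inner product $\bldx\cdot\bldg_j$ on $T'$ is affine in $(u,v)$. Write
\[
N(u,v)=\bldx\cdot\bldg_0=a_0u+b_0v+c_0,\qquad
\bldx\cdot\bldg_j=a_ju+b_jv+c_j .
\]
For $j\in\{1,4,5,6,8,9\}$ we have $\bldx\cdot\bldg_j\ge 0$ on $T'$, so the absolute value can be dropped. For $j=3$ we have $\bldx\cdot\bldg_3\le 0$, so we use $|\bldx\cdot\bldg_3|=-(\bldx\cdot\bldg_3)$, which is again affine. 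We must also check that these denominators do not vanish on $T'$, so that each $f_j$ is smooth there.

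For a quotient of affine functions $f=N/D$, a direct computation gives
\[
\partial_u f=\frac{a_0D-a_jN}{D^2}=\frac{(a_0b_j-a_jb_0)\,v+(a_0c_j-a_jc_0)}{D^2},
\qquad
\partial_v f=\frac{(b_0a_j-b_ja_0)\,u+(b_0c_j-b_jc_0)}{D^2}.
\]
The key observation is that each numerator is affine in only one variable: $v$ for $\partial_u f$ and $u$ for $\partial_v f$. So the sign of each partial derivative on $T'$ is determined by the numerator's sign at the two endpoints $v\in\{0,1\}$ (respectively $u\in\{0,1\}$), because $v$ and $u$ each range over $[0,1]$ on the parameter triangle. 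The proof therefore reduces to tabulating the coefficients $(a_j,b_j,c_j)$ from Lemma~\ref{lem:dode_ineq} and evaluating at most four explicit constants in $\mathbb{Q}(\sqrt5)$ for each $j$.

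I will carry this out case by case in the order listed. For $f_1$, $f_5$ and $f_8$, I expect the endpoint values of both numerators to be positive. For $f_6$ they should be negative, and for $f_4$ and $f_9$ the signs should be mixed as stated. For $j=3$ only $\partial_v f_3$ needs to be signed. Once a strict sign is established for at least one partial derivative of each $f_j$, the gradient never vanishes, so there is no stationary point.

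The case $f_4$ is a useful sanity check. We have $b_0=b_4$, so $\partial_v f_4$ has numerator $b_0(c_4-c_0)+b_0(a_4-a_0)u$. Since $c_4=c_0$ and $a_4<a_0$, this gives $\partial_v f_4\le 0$, with equality only at $u=0$. This matches the non-strict inequality in the statement.

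The main obstacle is bookkeeping with surds rather than any conceptual difficulty. Several endpoint constants, for instance in $f_9$ and $f_3$, are differences of terms like $(1+\tfrac{2\sqrt5}{5})$ times $\tfrac{\sqrt5}{5}$. Deciding their sign requires expanding the product and comparing it against rational bounds such as $2<\sqrt5<\tfrac94$. Where a numerator vanishes at an endpoint, as for $\partial_v f_4$, I will state the inequality as non-strict and rely on the other partial derivative for strictness.

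Finally, I will verify that each $D$ is strictly positive on $T'$ by checking it at the three vertices $\bldx_A,\bldx_B,\bldx_C$, which suffices because $D$ is affine. For $j\in\{6,9\}$ the constant term is $\tfrac{\sqrt5}{5}>0$, so positivity near $\bldx_C$ should be immediate.
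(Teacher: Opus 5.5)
Your approach is the same as the paper's: quotient rule for ratios of affine functions, with each numerator affine in a single variable. The signs you anticipate, including $\partial_v f_4\le 0$ with equality only at $u=0$, all agree with the coefficients from Lemma~\ref{lem:dode_ineq}. One step fails as written, though. Your final check, that every denominator is strictly positive at the three vertices, does not go through for $j=9$:
\[
\bldx_B\cdot\bldg_9=\tfrac12\bigl(\bldg_0\cdot\bldg_9+\bldg_4\cdot\bldg_9\bigr)=\tfrac12(1-1)=0 .
\]
Equivalently, $D_9(0,1)=b_9+c_9=-\tfrac{\sqrt5}{5}+\tfrac{\sqrt5}{5}=0$, while $N(0,1)=\tfrac{3+\sqrt5}{2}>0$. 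So $f_9$ is undefined at the vertex $\bldx_B$ and blows up to $+\infty$ there. It is therefore not smooth on all of $T'$, and the claimed signs of $\partial_u f_9,\partial_v f_9$ cannot hold ``on $T'$'' literally. Your remark about positivity near $\bldx_C$ for $j\in\{6,9\}$ looks at the wrong vertex.

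The repair is easy. $D_9$ is affine with vertex values $1,\,0,\,\tfrac{\sqrt5}{5}$, so it is strictly positive on $T'\setminus\{\bldx_B\}$. There your numerator analysis gives $\partial_u f_9<0$ and $\partial_v f_9>0$, and the $f_9$ part of the lemma should be stated on that set. For $j\in\{1,3,4,5,6,8\}$ the vertex check does succeed and the denominators are bounded away from zero on $T'$.

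The paper's proof makes the same slip: it asserts $\bldx\cdot\bldg_9>0$ on $T'$. The slip is harmless downstream, because $f_9$ is only maximized over the subregion $T_1$ below $L_{7,9}^{-}$. That region meets the edge $u=0$ only for $v\le 2\sqrt5-4<1$, so it excludes $\bldx_B$. Still, you should state the restriction explicitly rather than plan to verify positivity on all of $T'$.
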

\begin{proof}
On $T'$, we have $\bldx\cdot \bldg_3<0$ and
$\bldx\cdot \bldg_j>0$ for $j\in\{1,4,5,6,8,9\}$.
Hence
\[
|\bldx\cdot \bldg_3|=-(\bldx\cdot \bldg_3),
\qquad
|\bldx\cdot \bldg_j|=\bldx\cdot \bldg_j
\quad j\in\{1,4,5,6,8,9\}.
\]
Write
\[
N:=\bldx\cdot \bldg_0=a_0u+b_0v+c_0,
\qquad
D_j:=|\bldx\cdot \bldg_j|=a_ju+b_jv+c_j.
\]
Then
\[
\partial_u f_j=\frac{a_0D_j-a_jN}{D_j^2},
\qquad
\partial_v f_j=\frac{b_0D_j-b_jN}{D_j^2}.
\]
Substituting the explicit formulas of $\bldx\cdot \bldg_0$ and
$\bldx\cdot \bldg_j$ gives
\begin{align*}
\partial_u f_1
&=\frac{(10+4\sqrt5)+(5-3\sqrt5)v}{5D_1^2}>0,\\
\partial_v f_1
&=\frac{(15+7\sqrt5)+(-10+6\sqrt5)u}{10D_1^2}>0,\\
\partial_u f_4
&=\frac{(5+\sqrt5)+(5-\sqrt5)v}{5D_4^2}>0,\\
\partial_v f_4
&=\frac{(\sqrt5-5)u}{5D_4^2}\le 0,\\
\partial_u f_5
&=\frac{(10+4\sqrt5)+(-5+3\sqrt5)v}{5D_5^2}>0,\\
\partial_v f_5
&=\frac{(5+2\sqrt5)+(5-3\sqrt5)u}{5D_5^2}>0,\\
\partial_u f_6
&=-\frac{2(5+\sqrt5)}{5D_6^2}<0,\\
\partial_v f_6
&=-\frac{5+2\sqrt5}{5D_6^2}<0,\\
\partial_u f_8
&=\frac{(5+\sqrt5)+(-5+\sqrt5)v}{5D_8^2}>0,\\
\partial_v f_8
&=\frac{(5+2\sqrt5)+(5-\sqrt5)u}{5D_8^2}>0,\\
\partial_u f_9
&=-\frac{(5-\sqrt5)+2\sqrt5\,v}{5D_9^2}<0,\\
\partial_v f_9
&=\frac{(5+3\sqrt5)+4\sqrt5\,u}{10D_9^2}>0.
\end{align*}
For $j=3$, we use
\[
\begin{aligned}
D_3
&=|\bldx\cdot \bldg_3|
=-(\bldx\cdot \bldg_3)\\
&=\Bigl(1-\frac{\sqrt5}{5}\Bigr)u
+\Bigl(\frac12+\frac{3\sqrt5}{10}\Bigr)v
+\frac{\sqrt5}{5},
\end{aligned}
\]
and obtain
\[
\partial_v f_3
=-\frac{2\sqrt5\,u+(5+2\sqrt5)}{5D_3^2}<0.
\]
In each case, $\nabla f_j$ cannot be zero on $T'$,
so $f_j$ has no stationary point in $T'$.
\end{proof}

\begin{thm}
\label{thm:dode_m3to7_candidates}
For each $1\le m\le 7$, any maximizer of the corresponding $m$-height over $T'$
must lie in the candidate set
\[
\mathcal{S}
=
\bigl\{(u,v)\in T' :\ 
(1,0),\ (0,0),\ (0,1),
\bigr.
\]
\begin{equation*}
\bigl.
\bigl(0,\tfrac{1+3\sqrt5}{11}\bigr),\
\bigl(\tfrac{\varphi}{3},0\bigr),\ \bigl(0,\,2\sqrt5-4\bigr)
\bigr\}.
\end{equation*}
\end{thm}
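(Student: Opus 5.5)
The plan is to write the $m$-height objective on $T'$ as $F_m(\bldx)=\beta_0(\bldx)/\beta_{(m)}(\bldx)$. By Lemma~\ref{lem:rank_supports_on_T}, $\beta_{(m)}$ is, on each cell of a polygonal subdivision of $T'$, equal to a single $\beta_j$ with $j$ in the admissible index set for rank $m$. The subdivision is cut out by the lines where two admissible $\beta_j$ coincide (all $\beta_j$ are affine on $T'$ with fixed sign, except $\beta_2,\beta_7$, which are irrelevant for $m\le 7$ apart from $\beta_7$ at $\ell=7$).

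\textbf{Rank $m$ vs.\ index sets.}
\begin{itemize}
\item For $m=1,2$ the rank is attained only by $j=4$ and $j=8$. Lemma~\ref{lem:no_stationary_except_3_8} gives $\partial_u f_4>0$, $\partial_v f_4\le0$, so $f_4$ is maximized at $(1,0)$. Since $f_8$ is increasing in both variables, its maximum lies on $u+v=1$. Restricting to that edge, where $\bldx\cdot\bldg_8-\bldx\cdot\bldg_6=0$, a one-variable check should push it to $(1,0)$ or $(0,1)$.
\item For $m=3,\dots,6$, $F_m$ is the ratio with the $\ell$-th largest of $\{\beta_1,\beta_3,\beta_5,\beta_6\}$, i.e.\ a min/max combination of $f_1,f_3,f_5,f_6$.
\item For $m=7$ the ratio is with $\max\{\beta_7,\beta_9\}$, giving $\min\{f_7,f_9\}$.
\end{itemize}

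\textbf{Localizing maximizers.} In every case $F_m$ is, cellwise, one of the $f_j$. Lemma~\ref{lem:no_stationary_except_3_8} says each relevant $f_j$ has nonvanishing gradient, and by the sign pattern it is monotone in a coordinate direction. Hence a maximizer cannot be interior to a cell: it lies either on $\partial T'$ or on a switching curve $\beta_i=\beta_j$. These switching curves are line segments, because $\beta_i-\beta_j$ is affine when no sign change occurs. Along a switching segment, $F_m$ restricts to a ratio of affine functions of one parameter, i.e.\ a Möbius function, which is monotone. So the maximum on each segment occurs at an endpoint, and those endpoints are vertices of $T'$ or intersections of switching lines with edges of $T'$.

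Similarly, on each edge of $T'$, $F_m$ is piecewise Möbius and hence monotone on each piece. The maximum is therefore at a vertex or at a breakpoint where the active index changes. This reduces everything to a finite list of points: the three vertices $(1,0),(0,0),(0,1)$ plus the points where the comparison lines meet the edges.

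\textbf{Computing the breakpoints.} Using the formulas in the proof of Lemma~\ref{lem:dode_ineq}:
\begin{itemize}
\item $\beta_5-\beta_6$ on $v=0$ gives $u(-2\sqrt5/5-4\sqrt5/5)+1+\sqrt5/5=0$. I expect this to simplify to $u=\varphi/3$.
\item $\beta_1=\beta_6$ on $u=0$ gives $v=\tfrac{1+3\sqrt5}{11}$.
\item $\beta_1=\beta_3$ or $\beta_7=\beta_9$ on $u=0$ should give $v=2\sqrt5-4$.
\end{itemize}
The remaining crossings either fall outside $T'$ or coincide with vertices, e.g.\ $\beta_8=\beta_6$ exactly on $u+v=1$ and $\beta_5=\beta_3$ there too.

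\textbf{Main obstacle.} The hardest step is the bookkeeping: making sure that every line where the rank-$m$ index switches, for each $m$, has been listed, and that intermediate crossings on the edge $u+v=1$ contribute nothing new. For each $m$ I would tabulate the active index as a function of position and check the sign of each pairwise difference of affine functions on the three edges. I would also handle $\beta_7$, which changes sign, by splitting along $\bldx\cdot\bldg_7=0$ and noting that the objective is monotone across this line.
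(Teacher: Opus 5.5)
Your argument is sound in outline and takes a more uniform route than the paper. The paper cuts $T'$ along the switching lines and, in each subregion, maximizes candidate ratios one at a time. For $m=3,\dots,6$ it does this for every admissible $f_j$, whether or not it is active there. It uses the sign pattern of Lemma~\ref{lem:no_stationary_except_3_8} to push the maximizer onto one side of the subregion, then differentiates the explicit one-variable restriction to pick an endpoint. You instead work with $F_m=\beta_0/\beta_{(m)}$ itself. Nonvanishing gradients exclude cell interiors, and a ratio of affine functions restricted to a line is a M\"obius function, which excludes the interiors of switching segments and of edge pieces. This explains $\mathcal{S}$ structurally: it consists of the three vertices plus the points where a rank-$m$ switching line meets $\partial T'$. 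It also turns the paper's sign computations into nondegeneracy checks and removes the region-by-region bookkeeping. In exchange, the paper's explicit computations record which endpoint wins on each piece.

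Four points need tightening.
\begin{itemize}
\item (i) ``Monotone'' must be ``strictly monotone.'' The statement concerns \emph{any} maximizer, so for each M\"obius restriction $t\mapsto(at+b)/(ct+d)$ that occurs you must verify $ad-bc\neq0$. Otherwise a whole segment could consist of maximizers. The paper's nonzero derivative formulas supply exactly this.
\item (ii) Your claim that segment endpoints lie on $\partial T'$ needs the check that, for each fixed $m$, the switching lines do not cross inside $T'$. This holds. For $m=4$, the lines $v=0$ (where $\beta_1=\beta_5$), $L_{1,6}$ and $L_{5,6}$ are concurrent at $(\varphi/3,0)$. For $m=5$, the lines $u=0$ (where $\beta_3=\beta_6$), $L_{1,3}$ and $L_{1,6}$ are concurrent at $(0,\tfrac{1+3\sqrt5}{11})$. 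It is not automatic, though: $L_{1,3}$ and $L_{5,6}$ meet at an interior point near $(0.26,0.52)$. That crossing is harmless only because no single rank switches across both lines.
\item (iii) $f_7$ is not covered by Lemma~\ref{lem:no_stationary_except_3_8}. Since $\bldx\cdot\bldg_7-\bldx\cdot\bldg_9=-2u-v$ and $\bldx\cdot\bldg_9\ge0$, the region $\{\beta_7>\beta_9\}$ lies in $\{\bldx\cdot\bldg_7<0\}$. So the zero line of $\bldx\cdot\bldg_7$ sits inside the $f_9$-cell and creates no breakpoints, and you need not split along it. You still have to compute $\nabla f_7$ on that region, as the paper does.
\item (iv) There is a slip in your breakpoint list. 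On $u=0$ one has $\beta_3=\beta_6$, so $\beta_1=\beta_3$ there gives $v=\tfrac{1+3\sqrt5}{11}$, not $2\sqrt5-4$. The point $(0,2\sqrt5-4)$ comes only from $-\bldx\cdot\bldg_7=\bldx\cdot\bldg_9$.
\end{itemize}
None of these changes the resulting candidate set.
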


\begin{proof}
We treat the cases $m=1$, $m=2$, $m\in\{3,4\}$, $m\in\{5,6\}$, and $m=7$ separately.

\textbf{Case $m=1$.}
From Lemma~\ref{lem:no_stationary_except_3_8}, we have
\[
\partial_u f_4(u,v)
\;>\;0,
\qquad
\partial_v f_4(u,v)
\;<\;0.
\]
Hence $f_4$ is increasing in $u$ and decreasing in $v$ on $T'$,
so the maximizer is attained at $(u,v)=(1,0)$, i.e., $\bldx=\bldx_A=\bldg_0$. Evaluating gives
\[
f_4(\bldg_0)=\frac{\bldg_0\cdot \bldg_0}{\bldg_0\cdot \bldg_4}=\frac{3}{\sqrt5}.
\]

\textbf{Case $m=2$.}
We have
\[
\partial_u f_8(u,v)
\;>\;0,
\qquad
\partial_v f_8(u,v)
\;>\;0.
\]
Thus $f_8$ is increasing in both $u$ and $v$, and by monotonicity the maximizer
lies on the edge $u+v=1$. Along this edge,
\[
f_8(u,1-u)=\frac{(3-\sqrt5)u+(3+\sqrt5)}{(\sqrt5-1)u+(1+\sqrt5)}
\]
is strictly decreasing in $u$, so the unique maximizer is $(u,v)=(0,1)$, i.e.,
$\bldx=\bldx_B=\tfrac{\bldg_0+\bldg_4}{2}$. Evaluating gives
\[
f_8(\bldx_B)=\frac{\bldx_B\cdot \bldg_0}{\bldx_B\cdot \bldg_8}=\varphi.
\]

\medskip
\textbf{Case $m=3,4$ (the denominator is in $\{\bldg_1,\bldg_5,\bldg_6\}$).}
By Lemma~\ref{lem:rank_supports_on_T}, for $m=3,4$, the $(m\!+\!1)$-th largest projection
must be attained at index $1$, $5$, or $6$.

Define
\[
L_{i,j}:=\{\bldx\in T':\ |\bldx\cdot \bldg_i| = |\bldx\cdot \bldg_j|\}.
\]
Since $\bldx\cdot \bldg_1,\ \bldx\cdot \bldg_5,\ \bldx\cdot \bldg_6\ge 0$ on $T'$, we drop the absolute values and write
$L_{i,j}=\{\bldx\in T': \bldx\cdot \bldg_i=\bldx\cdot \bldg_j\}$ for $i,j\in\{1,5,6\}$. The switching boundaries among
these three denominators are given by
\begin{align*}
L_{1,6}:
\bldx\cdot \bldg_1=\bldx\cdot \bldg_6
&\iff 12\sqrt5\,u+(7\sqrt5+5)\,v \\
&=10+2\sqrt5,\\
L_{5,6}:
\bldx\cdot \bldg_5=\bldx\cdot \bldg_6
&\iff 6\sqrt5\,u+(5+\sqrt5)\,v=5+\sqrt5.
\end{align*}
We do not need to consider the switching boundary $L_{1,5}$ since by Lemma~\ref{lem:dode_ineq},
$\bldx\cdot \bldg_5 \ge \bldx\cdot \bldg_1$.

By Lemma~\ref{lem:no_stationary_except_3_8}, none of \(f_1\), \(f_5\), or \(f_6\) has a stationary point in \(T'\).
The switching boundaries partition \(T'\) into three subregions, as illustrated in Fig.~\ref{fig:subregion}.
Within each subregion, the ordering among the denominators is fixed, and the relevant denominator is \(\bldg_j\) for some
\(j\in\{1,5,6\}\).
Consequently, the \(m\)-height optimization reduces to maximizing a single ratio \(f_j(u,v)\) over that subregion.
We thus analyze the three subregions separately, maximizing \(f_1(u,v)\), \(f_5(u,v)\), or \(f_6(u,v)\) on the
corresponding region.

\begin{figure}[htbp]
\centering
\begin{tikzpicture}[scale=5, line cap=round, line join=round]
\usetikzlibrary{calc}

\pgfmathsetmacro{\sqrttwofive}{sqrt(5)}
\pgfmathsetmacro{\uCA}{(5+\sqrttwofive)/(6*\sqrttwofive)} 
\pgfmathsetmacro{\vCB}{(10+2*\sqrttwofive)/(7*\sqrttwofive+5)} 

\coordinate (C) at (0,0);
\coordinate (A) at (1,0);
\coordinate (B) at (0,1);

\coordinate (Q) at (\uCA,0);   
\coordinate (R) at (0,\vCB);   

\fill[blue!18] (C)--(Q)--(R)--cycle;

\fill[green!18] (Q)--(R)--(B)--cycle;

\fill[red!15] (Q)--(A)--(B)--cycle;

\draw[thick] (C)--(A)--(B)--cycle;

\node[below left]  at (C) {$C$};
\node[below right] at (A) {$A$};
\node[above left]  at (B) {$B$};

\draw[thick, dashed]
  (Q)--(B)
  node[pos=0.55, sloped, above] {$L_{5,6}$};

\draw[thick, dashed]
  (Q)--(R)
  node[pos=0.88, sloped, above] {$L_{1,6}$};

\coordinate (M1) at ($(C)!0.5!(Q)$);
\coordinate (T1) at ($(M1)!0.35!(R)$);
\node at (T1) {$T_1$};

\coordinate (M2) at ($(Q)!0.5!(B)$);
\coordinate (T2) at ($(M2)!0.40!(R)$);
\node[xshift=6pt, yshift=-8pt] at (T2) {$T_2$};

\coordinate (M3) at ($(A)!0.5!(B)$);
\coordinate (T3) at ($(M3)!0.33!(Q)$);
\node at (T3) {$T_3$};

\end{tikzpicture}
\caption{Triangle $T'$ in $(u,v)$-coordinates and switching lines $L_{1,6}$ and $L_{5,6}$, partitioning $T'$ into three subregions.}
\label{fig:subregion}
\end{figure}

\smallskip
\noindent\textbf{(A) Maximizers in $T_1$.}
From Lemma~\ref{lem:no_stationary_except_3_8}, both $\partial_u f_1$ and $\partial_v f_1$
are strictly positive. Hence $f_1$ is strictly increasing in both $u$ and $v$, and its
maximum over $T_1$ must be attained on the edge $L_{1,6}$. Evaluating $f_1$ along this edge shows that
\[
f_1\Bigl(u,\frac{10+2\sqrt5-12\sqrt5\,u}{7\sqrt5+5}\Bigr)
=\frac{(13-5\sqrt5)u+(13+6\sqrt5)}{(4\sqrt5-6)u+(5+4\sqrt5)},
\]
with derivative
\[
\frac{d}{du}(\cdot)=
\frac{11(\sqrt5-7)}{\bigl((4\sqrt5-6)u+(5+4\sqrt5)\bigr)^2}<0.
\]
Hence the maximum is attained at $u=0$, i.e.,
\[
(u,v)=\Bigl(0,\ \frac{1+3\sqrt5}{11}\Bigr).
\]

For $f_5$, both $\partial_u f_5$ and $\partial_v f_5$ are strictly positive. Evaluating $f_5$ along $L_{1,6}$ shows that
\[
f_5\bigl(u,v(u)\bigr)
=\frac{(6\sqrt5-10)u+(25+11\sqrt5)}{(4\sqrt5-20)u+(15+9\sqrt5)}.
\]
Differentiating yields
\[
\frac{d}{du}f_5\bigl(u,v(u)\bigr)
=\frac{40(10+3\sqrt5)}{\bigl((4\sqrt5-20)u+(15+9\sqrt5)\bigr)^2}>0.
\]
Hence $f_5$ is strictly increasing along $L_{1,6}$ inside $T'$, so its maximum on this segment
is attained at the endpoint with largest $u$, namely where $v=0$:
\[
(u,v)=\Bigl(\frac{10+2\sqrt5}{12\sqrt5},\,0\Bigr)
=\Bigl(\frac{1+\sqrt5}{6},\,0\Bigr).
\]

For $f_6$, both partial derivatives satisfy $\partial_u f_6<0$ and $\partial_v f_6<0$ on $T'$, so
$f_6$ is strictly decreasing in both variables. Hence its maximum over $T_1$ is attained at the
vertex $(u,v)=(0,0)$.

\bigskip
\noindent\textbf{(B) Maximizers in $T_2$.}
For $f_1$, both $\partial_u f_1$ and $\partial_v f_1$ are strictly positive. Evaluating $f_1$ along $L_{5,6}$ shows that
\[
f_1\Bigl(u,\ 1+\tfrac32(1-\sqrt5)u\Bigr)
=\frac{(4-2\sqrt5)u+(3+\sqrt5)}{2\bigl((3-\sqrt5)u+1\bigr)}.
\]
Differentiating gives
\[
\frac{d}{du}(\cdot)=
\frac{-4\sqrt5}{\bigl(2((3-\sqrt5)u+1)\bigr)^2}<0,
\]
hence the maximum is attained at $u=0$, i.e., at $(u,v)=(0,1)$.

For $f_5$, both $\partial_u f_5$ and $\partial_v f_5$ are strictly positive. Evaluating $f_5$ along $L_{5,6}$ shows that
\[
f_5\Bigl(u,\ 1+\tfrac32(1-\sqrt5)u\Bigr)
=\frac{(4-2\sqrt5)u+(3+\sqrt5)}{(\sqrt5-3)u+(1+\sqrt5)}.
\]
with derivative
\[
\frac{d}{du}(\cdot)=
\frac{10(1+\sqrt5)}{\bigl((\sqrt5-3)u+(1+\sqrt5)\bigr)^2}>0.
\]
Thus the maximum on $L_{5,6}$ is attained at the largest feasible $u$, which occurs at $v=0$, i.e.,
\[
(u,v)=\Bigl(\frac{1+\sqrt5}{6},\ 0\Bigr).
\]

For $f_6$, both $\partial_u f_6$ and $\partial_v f_6$ are strictly negative. Evaluating $f_6$ along $L_{1,6}$ shows that
\[
f_6\Bigl(u,\frac{10+2\sqrt5-12\sqrt5\,u}{7\sqrt5+5}\Bigr)
=\frac{(6\sqrt5-10)u+(25+11\sqrt5)}{(10-2\sqrt5)u+(15+5\sqrt5)}.
\]
Its derivative is
\[
\frac{d}{du}(\cdot)
=-\frac{20(7+\sqrt5)}{\bigl((10-2\sqrt5)u+(15+5\sqrt5)\bigr)^2}<0.
\]
Hence the maximum on this segment is attained at the endpoint with smallest $u$, i.e.,
\[
(u,v)=\Bigl(0,\ \frac{1+3\sqrt5}{11}\Bigr).
\]

\bigskip
\noindent\textbf{(C) Maximizers in $T_3$.}
For $f_1$, both $\partial_u f_1$ and $\partial_v f_1$ are strictly positive. Evaluating $f_1$ along edge $AB$ shows that
\[
f_1(u,1-u)
=\bigl(\frac32-\frac{\sqrt5}{2}\bigr)u+\bigl(\frac32+\frac{\sqrt5}{2}\bigr).
\]
Differentiating yields
\[
\frac{d}{du}(\cdot)=\frac32-\frac{\sqrt5}{2}>0,
\]
so $f_1$ is strictly increasing along $AB$.
Hence the maximum on $AB$ is attained at $u=1$, i.e., at the vertex $A=(1,0)$.

For $f_5$, both $\partial_u f_5$ and $\partial_v f_5$ are strictly positive. Evaluating $f_5$ along edge $AB$ shows that
\[
f_5(u,1-u)
=\frac{(3-\sqrt5)u+(3+\sqrt5)}{(1-\sqrt5)u+(1+\sqrt5)}.
\]
Its derivative is
\[
\frac{d}{du}f_5(u,1-u)
=\frac{4\sqrt5}{\bigl((1-\sqrt5)u+(1+\sqrt5)\bigr)^2}>0,
\]
so $f_5(u,1-u)$ is strictly increasing on $u$. Therefore the maximum on edge $AB$
is attained at $u=1$.

For $f_6$, both $\partial_u f_6$ and $\partial_v f_6$ are strictly negative. Evaluating $f_6$ along edge $L_{5,6}$ shows that
\[
f_6\Bigl(u,\,1+\tfrac{3}{2}(1-\sqrt5)u\Bigr)
=\frac{(2-\sqrt5)u+\tfrac{3+\sqrt5}{2}}
{\tfrac{\sqrt5-3}{2}\,u+\tfrac{1+\sqrt5}{2}}.
\]
Its derivative is
\[
\frac{d}{du}f_6\Bigl(u,\,1+\tfrac{3}{2}(1-\sqrt5)u\Bigr)
=\frac{\tfrac{\sqrt5-1}{2}}
{\bigl(\tfrac{\sqrt5-3}{2}\,u+\tfrac{1+\sqrt5}{2}\bigr)^2}>0.
\]
so $f_6$ is strictly increasing along $L_{5,6}$.
Therefore, the maximum on this segment is attained at the endpoint with largest $u$,
namely
\[
(u,v)=\Bigl(\frac{1+\sqrt5}{6},\,0\Bigr).
\]

\medskip
\textbf{Case $m=5,6$ (the denominator is in $\{\bldg_1,\bldg_3,\bldg_6\}$).}
For $m=5$ or $m=6$, the $(m\!+\!1)$-th largest magnitude must be attained at index $1$, $3$, or $6$. On $T'$,
$\bldx\cdot \bldg_1\ge0$ and $\bldx\cdot \bldg_6\ge0$, while $\bldx\cdot \bldg_3<0$, so
$|\bldx\cdot \bldg_1|=\bldx\cdot \bldg_1$, $|\bldx\cdot \bldg_6|=\bldx\cdot \bldg_6$, and
$|\bldx\cdot \bldg_3|=-(\bldx\cdot \bldg_3)$.
The switching interfaces are
\begin{align*}
L_{1,3}:\ &2(5+\sqrt5)\,u+(5+7\sqrt5)\,v=10+2\sqrt5,\\
L_{1,6}:\ &12\sqrt5\,u+(7\sqrt5+5)\,v=10+2\sqrt5.
\end{align*}
Similar to the previous case, these lines partition $T'$ into three subregions.

\begin{figure}[htbp]
\centering
\begin{tikzpicture}[scale=5, line cap=round, line join=round]
\usetikzlibrary{calc}

\pgfmathsetmacro{\sqrttwofive}{sqrt(5)}
\pgfmathsetmacro{\uQ}{(10+2*\sqrttwofive)/(12*\sqrttwofive)} 
\pgfmathsetmacro{\vR}{(10+2*\sqrttwofive)/(7*\sqrttwofive+5)}  

\coordinate (C) at (0,0);
\coordinate (A) at (1,0);
\coordinate (B) at (0,1);

\coordinate (Q) at (\uQ,0);   
\coordinate (R) at (0,\vR);   

\fill[blue!18] (C)--(Q)--(R)--cycle;

\fill[green!18] (A)--(Q)--(R)--cycle;

\fill[red!15] (A)--(R)--(B)--cycle;

\draw[thick] (C)--(A)--(B)--cycle;

\node[below left]  at (C) {$C$};
\node[below right] at (A) {$A$};
\node[above left]  at (B) {$B$};

\draw[thick, dashed]
  (A)--(R)
  node[pos=0.55, sloped, above] {$L_{1,3}$};

\draw[thick, dashed]
  (Q)--(R)
  node[pos=0.5, sloped, above] {$L_{1,6}$};

\coordinate (M1) at ($(C)!0.5!(Q)$);
\coordinate (T1) at ($(M1)!0.38!(R)$);
\node at (T1) {$T_1$};

\coordinate (M2) at ($(A)!0.5!(Q)$);
\coordinate (T2) at ($(M2)!0.50!(R)$);
\node[xshift=20pt, yshift=-20pt] at (T2) {$T_2$};

\coordinate (M3) at ($(A)!0.5!(B)$);
\coordinate (T3) at ($(M3)!0.28!(R)$);
\node at (T3) {$T_3$};

\end{tikzpicture}
\caption{Triangle $T'$ in $(u,v)$-coordinates and switching lines $L_{1,3}$ and $L_{1,6}$, partitioning $T'$ into three subregions.}
\label{fig:subregion_case2}
\end{figure}

\smallskip
\noindent\textbf{(A) Maximizers in $T_1$.}
Maximizing $f_2$ and $f_7$ in $T_1$ is already done in case I, thus we skip them.

For $f_4$, From Lemma~\ref{lem:no_stationary_except_3_8}, $\partial_v f_4 < 0$. Thus the maximizer is on $v=0$, which is edge $CA$
\[
f_4(u,0)
=\frac{(10-2\sqrt5)u+(5+2\sqrt5)}{(5-\sqrt5)u+\sqrt5}.
\]
Its derivative is
\[
\frac{d}{du}f_4(u,0)
=\frac{5(\sqrt5-5)}{\bigl((5-\sqrt5)u+\sqrt5\bigr)^2}<0.
\]
Therefore $f_4(u,0)$ is strictly decreasing on $u\in[0,1]$, and the maximum on the edge $v=0$
is attained at $u=0$.

\smallskip
\noindent\textbf{(A) Maximizers in $T_1$.}
Maximizing $f_1$ and $f_6$ in $T_1$ was already done in the case $m=3,4$, so we skip them.

For $f_3$, from Lemma~\ref{lem:no_stationary_except_3_8}, $\partial_v f_3 < 0$. Thus the maximizer is on $v=0$, which is edge $CA$.
Along this edge,
\[
f_3(u,0)
=\frac{(10-2\sqrt5)u+(5+2\sqrt5)}{(5-\sqrt5)u+\sqrt5}.
\]
Its derivative is
\[
\frac{d}{du}f_3(u,0)
=\frac{5(\sqrt5-5)}{\bigl((5-\sqrt5)u+\sqrt5\bigr)^2}<0.
\]
Therefore $f_3(u,0)$ is strictly decreasing on $u\in[0,1]$, and the maximum on the edge $v=0$
is attained at $u=0$.

\smallskip
\noindent\textbf{(B) Maximizers in $T_2$.}
For $f_1$, both $\partial_u f_1$ and $\partial_v f_1$ are strictly positive. Evaluating $f_1$ along edge $L_{1,3}$ shows that
\[
\begin{aligned}
&f_1\Bigl(u,\frac{10+2\sqrt5-2(5+\sqrt5)\,u}{5+7\sqrt5}\Bigr) \\
&\qquad =
\frac{(-10+10\sqrt5)\,u+(25+11\sqrt5)}
     {(-10+2\sqrt5)\,u+(15+5\sqrt5)}.
\end{aligned}
\]
Its derivative is
\[
\frac{d}{du}\,f_1(u,v(u))
=\frac{240+160\sqrt5}{\bigl((-10+2\sqrt5)\,u+(15+5\sqrt5)\bigr)^2}>0.
\]
Therefore, the maximum is attained at the endpoint with the largest $u$, i.e., at the vertex $A$.

For \(f_3\), we have \(\partial_v f_3<0\). Hence any maximizer must lie on \(CA\cap T_2\) or on \(L_{1,6}\).
From the previous calculation, along \(CA\) we also have \(\partial_u f_3<0\). Therefore, if the maximizer lies on \(CA\cap T_2\),
it must occur at the endpoint where \(CA\cap T_2\) meets \(L_{1,6}\). Consequently, it suffices to restrict attention to \(L_{1,6}\).
Evaluating $f_3$ along edge $L_{1,6}$ shows that
\[
f_3\Bigl(u,\frac{10+2\sqrt5-12\sqrt5\,u}{7\sqrt5+5}\Bigr)
=\frac{(13-5\sqrt5)u+(13+6\sqrt5)}{(5-7\sqrt5)u+(5+4\sqrt5)}.
\]
Its derivative is
\[
\frac{d}{du}\,f_3
=\frac{110+88\sqrt5}{\bigl((5-7\sqrt5)u+(5+4\sqrt5)\bigr)^2}>0.
\]
Therefore the maximum of $f_3$
over $L_{1,6}$ is attained at the endpoint where $v=0$, i.e.,
\[
(u,v)=\Bigl(\frac{1+\sqrt5}{6},\,0\Bigr).
\]

For $f_6$, both $\partial_u f_6$ and $\partial_v f_6$ are strictly negative. Thus the maximum is on edge $L_{1,6}$,
which has already been calculated in the case $m=3,4$.

\smallskip
\noindent\textbf{(C) Maximizers in $T_3$.}
The maximizer for $f_1$ in $T_3$ is on edge $AB$, and maximizing $f_1$ along $AB$ was already done in the case $m=3,4$.

For $f_3$, since $\partial_v f_3 < 0$, the maximizer is on edge $L_{1,3}$. Since on that edge, $f_1 = f_3$,
it is equivalent to evaluating $f_1$ along $L_{1,3}$, which was already done above.

For $f_6$, since $\partial_v f_6 < 0$, the maximizer is on edge $L_{1,3}$.
\[
\begin{aligned}
&f_6\Bigl(u,\frac{10+2\sqrt5-2(5+\sqrt5)\,u}{5+7\sqrt5}\Bigr) \\
&\qquad =
\frac{(10\sqrt5-10)\,u+(25+11\sqrt5)}
     {20u+(15+5\sqrt5)}.
\end{aligned}
\]
Differentiating gives
\[
\frac{d}{du}\,f_6
=-\frac{40(10+3\sqrt5)}{\bigl(20u+(15+5\sqrt5)\bigr)^2}<0.
\]
Therefore, the maximum of $f_6$ over
$L_{1,3}$ is attained at $u=0$, i.e.,
\[
(u,v)=\Bigl(0,\ \frac{1+3\sqrt5}{11}\Bigr).
\]

\medskip
\textbf{Case $m=7$ (the denominator is in $\{\bldg_7,\bldg_{9}\}$).}

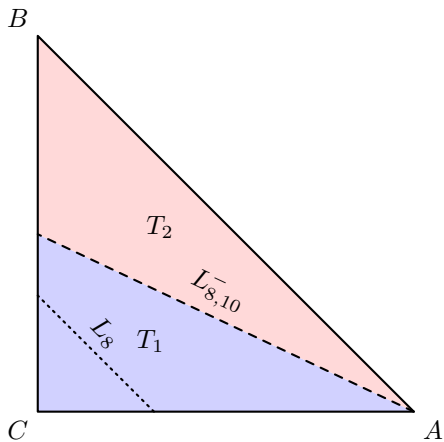
\begin{figure}[htbp]
\centering
\begin{tikzpicture}[scale=5, line cap=round, line join=round]
\usetikzlibrary{calc}

\pgfmathsetmacro{\sqrttwofive}{sqrt(5)}
\pgfmathsetmacro{\k}{2*\sqrttwofive-4}           
\pgfmathsetmacro{\cL}{(\sqrttwofive-1)/4}        

\coordinate (C) at (0,0);
\coordinate (A) at (1,0);
\coordinate (B) at (0,1);

\coordinate (R) at (0,\k); 

\coordinate (P) at (\cL,0); 
\coordinate (S) at (0,\cL); 

\fill[blue!18] (C)--(A)--(R)--cycle;

\fill[red!15] (A)--(R)--(B)--cycle;

\draw[thick] (C)--(A)--(B)--cycle;

\node[below left]  at (C) {$C$};
\node[below right] at (A) {$A$};
\node[above left]  at (B) {$B$};

\draw[thick, dashed]
  (A)--(R)
  node[pos=0.55, sloped, above] {$L_{7,9}^{-}$};

\draw[thick, dotted]
  (P)--(S)
  node[pos=0.55, sloped, above] {$L_{7}$};

\coordinate (M1) at ($(C)!0.5!(A)$);
\coordinate (T1) at ($(M1)!0.40!(R)$);
\node at (T1) {$T_1$};

\coordinate (M2) at ($(A)!0.5!(B)$);
\coordinate (T2) at ($(M2)!0.35!(R)$);
\node at (T2) {$T_2$};

\end{tikzpicture}
\caption{Triangle $T'$ in $(u,v)$-coordinates, the switching line $L_{7,9}^{-}$, and the zero line $L_7:\ x\cdot g_7=0$.}
\label{fig:subregion_case_L810m_with_L8}
\end{figure}

The switching condition $|\bldx\cdot \bldg_7|=|\bldx\cdot \bldg_{9}|$ splits into
\[
L_{7,9}^{+}: \ \bldx\cdot \bldg_7=\bldx\cdot \bldg_{9},
\qquad
L_{7,9}^{-}: \ \bldx\cdot \bldg_7=-\bldx\cdot \bldg_{9},
\]
which reduce to
\[
L_{7,9}^{+}:\ v=-2u,
\qquad
L_{7,9}^{-}:\ v=(2\sqrt5-4)(1-u).
\]
The switching line $L_{7,9}^{+}:\ v=-2u$ intersects $T'$ only at the vertex $C=(0,0)$.
Hence, it suffices to consider the two subregions partitioned by
\[
L_{7,9}^{-}:\ v=(2\sqrt5-4)(1-u).
\]
We further denote by $L_7$ the zero line of $\bldx\cdot \bldg_7$, i.e., $\bldx\cdot \bldg_7=0$.

In subregion $T_1$, the ordering satisfies
\[
|\bldx\cdot \bldg_{9}|\ge |\bldx\cdot \bldg_7|.
\]
Moreover, since $\bldx\cdot \bldg_{9}\ge 0$ throughout $T'$, the $m$-height function
for $m=7$ reduces to
\[
f_{9}(u,v)=\frac{\bldx\cdot \bldg_0}{\bldx\cdot \bldg_{9}}.
\]
By Lemma~\ref{lem:no_stationary_except_3_8}, we have
\[
\partial_u f_{9}<0,
\qquad
\partial_v f_{9}>0
\quad \text{on } T'.
\]
Therefore, the maximum of $f_{9}$ over $T_1$ is attained at the vertex
with minimal $u$ and maximal $v$, namely at the intersection of $L_{7,9}^{-}$ and
the edge $BC$, which is
\[
(u,v)=(0,\ 2\sqrt5-4).
\]

In subregion $T_2$, the ordering is reversed and the $m$-height function is
\[
f_{7}(u,v)=\frac{\bldx\cdot \bldg_0}{-\,\bldx\cdot \bldg_7}.
\]
Then direct differentiation yields
\[
\begin{aligned}
\partial_u f_7
&=\frac{\Bigl(1-\frac{\sqrt5}{5}\Bigr)v-(1+\sqrt5)}{D(u,v)^2}<0,\\
\partial_v f_7
&=\frac{\Bigl(\frac{\sqrt5}{5}-1\Bigr)u-\Bigl(\frac32+\frac{7\sqrt5}{10}\Bigr)}{D(u,v)^2}<0.
\end{aligned}
\]
Hence $f_7$ is strictly decreasing in both variables, and its maximum over $T_2$
must be attained on the boundary $L_{7,9}^{-}$.

Evaluating $f_7$ along $L_{7,9}^{-}:\ v=(2\sqrt5-4)(1-u)$ gives
\[
f_7\Bigl(u,\ (2\sqrt5-4)(1-u)\Bigr)
=\frac{(3-\sqrt5)u+\sqrt5}{(3-\sqrt5)u+(\sqrt5-2)}.
\]
Differentiating,
\[
\begin{aligned}
\frac{d}{du}\,f_7\Bigl(u,\ (2\sqrt5-4)(1-u)\Bigr)
&=
\frac{2(\sqrt5-3)}
     {\bigl((3-\sqrt5)u+(\sqrt5-2)\bigr)^2} \\
&<0.
\end{aligned}
\]
so this restriction is strictly decreasing in $u$. Consequently, the maximum of $f_7$ on
$T_2$ is attained at the endpoint $u=0$, i.e.,
\[
(u,v)=(0,\ 2\sqrt5-4).
\]
This completes the proof.
\end{proof}

The following table summarizes the $m$-height profiles of the dual polyhedral codes.

\begin{table}[htbp]
\centering
\caption{$m$-height profiles of the dual polyhedral codes.}
\label{tab:mheight_dual_polyhedral}
\begin{tabular}{c c c}
\hline
Code & $m$ & $m$-height\\
\hline
\multirow{3}{*}{Dual icosahedral}
& $1$ & $\sqrt{5}$\\
& $2$ & $\sqrt{5}$\\
& $3$ & $2+\sqrt{5}$\\
\hline
\multirow{7}{*}{Dual dodecahedral}
& $1$ & $\dfrac{3}{\sqrt{5}}$\\
& $2$ & $\varphi$\\
& $3$ & $4-\sqrt{5}$\\
& $4$ & $3$\\
& $5$ & $2+\sqrt{5}$\\
& $6$ & $2+\sqrt{5}$\\
& $7$ & $5+2\sqrt{5}$\\
\hline
\end{tabular}
\end{table}

\end{document}